 \documentclass[twocolumn]{autart}    

\setlength{\parskip}{10pt}

\pdfobjcompresslevel=4            
\usepackage{bm}
\usepackage{enumerate}
\usepackage{commath}
\usepackage{graphicx}
\graphicspath{{Pictures/}}
\usepackage{amsmath}
\usepackage{subcaption}
\usepackage{breqn}
\usepackage{cite}
\usepackage[table]{xcolor}
\usepackage{booktabs}
\usepackage{empheq}
\usepackage{amsfonts}
\usepackage{amssymb}
\let \theoremstyle\relax
\usepackage{amsthm}
\usepackage{hyperref}
\usepackage{xcolor}
\usepackage{mathrsfs}
\usepackage{accents}
\usepackage{thmtools}
\usepackage{thm-restate}
\usepackage{float}
\usepackage{comment}
\usepackage[normalem]{ulem}

\theoremstyle{plain}
\newtheorem{theorem}{Theorem}
\newtheorem{corollary}{Corollary}

\newtheorem{definition}{Definition}
\newtheorem{assumption}{Assumption}

\newtheorem{proposition}{Proposition}

\newtheorem*{problem*}{Problem}
\newtheorem*{theorem*}{Theorem}
\newtheorem{assumption*}{Assumption}

\theoremstyle{remark}
\newtheorem{remark}{Remark}

\newtheorem{example}{Example}


\newcommand{\redtext}[1]{{\color{red}#1}}
\newcommand{\bluetext}[1]{{\color{black}#1}}

\newcommand{\myvar}[1]{\bm{#1}}
\newcommand{\myvarfrak}[1]{\bm{\mathfrak{#1}}}

\newcommand{\myvardot}[1]{\dot{\myvar{#1}}}

\newcommand{\myset}[1]{\mathscr{#1}}






\begin{document}


\begin{frontmatter}

\title{
On the Undesired Equilibria Induced by Control Barrier Function Based Quadratic Programs
} 

\thanks{ This work was supported in part by the Swedish Research Council (VR), in part by the Swedish Foundation for Strategic Research (SSF), in
part by the ERC CoG LEAFHOUND, in part by the EU CANOPIES Project, and in part by the Knut and Alice Wallenberg Foundation (KAW).}

\author[]{Xiao Tan}\ead{xiaotan@kth.se},    
\author[]{Dimos V. Dimarogonas}\ead{dimos@kth.se}  

\address[]{Division of Decision and Control Systems, School of Electrical Engineering and Computer Science, KTH Royal Institute of Technology, SE-100 44, Stockholm, Sweden}  

\begin{keyword}                           

Control barrier functions; Lyapunov method; Nonlinear analysis        
\end{keyword}                             

\begin{abstract}
In this paper, we analyze the  system behavior for  general nonlinear control-affine systems when a control barrier function-induced quadratic program-based controller is employed for feedback. In particular, we characterize the existence and locations of possible equilibrium points of the closed-loop system and also provide analytical results on how design parameters affect them. Based on this analysis, a simple modification on the existing quadratic program-based controller is provided, which, without any assumptions other than those taken in the original program,  inherits the safety set forward invariance property, and further guarantees the complete elimination of undesired equilibrium points in the interior of the safety set as well as one type of boundary equilibrium points, and local asymptotic stability of the origin. Numerical examples are given alongside the theoretical discussions.
 

\end{abstract}

\end{frontmatter}


\section{Introduction}
 Dynamical system safety has \bluetext{increasingly gained} attention driven by practical needs from robotics, autonomous driving, and other safe-critical applications. One formal definition regarding system safety relates to a set of states, \bluetext{referred} to as the safety set,  that the system is supposed to evolve within. In \bluetext{the} control community, this constrained control problem has been under discussion for a long time. Two popular methods are barrier Lyapunov functions \cite{tee2009barrier} and model predictive control \cite{mayne2000constrained} from which a safe and stabilizing control law can be derived. In general, the former method suffers from the delicate design process, the sensitivity to system noise, and the unconstrained inputs. The latter method is  usually computationally heavy, thus may not be suitable for online implementation for embedded systems.

 Alternatively, the study of control barrier functions (CBFs)\cite{Xu2015a,Ames2017,Ames2019control,xiao2021high} enforces the safety set to be forward invariant and  asymptotically stable by requiring a point-wise condition on the control input. A similar point-wise condition was earlier studied  \cite{sontag1989universal}  under the concept of control Lyapunov functions (CLFs), where system stability is concerned. In  \cite{Xu2015a}, a CLF-CBF based quadratic program (CLF-CBF-QP) formulation is proposed  with an intention to provide a modular, safe, and stabilizing control design. Thanks to the increasing computational capabilities in modern control systems and its modularity design nature,  the CLF-CBF-QP formulation has been applied successfully to a wide range applications, e.g., in adaptive cruise control \cite{Xu2015a}, bipedal robot walking \cite{Hsu2015},  multi-robot coordination, verification and control \cite{glotfelter2017nonsmooth,Wang2017a,lindemann2018control}.




 However, one major limitation with   the CLF-CBF-QP formulation is that, while the controller ensures system safety, no formal guarantee has been achieved  on the system trajectories converging to the origin (the unique minimum of the CLF). This is mainly due to the relaxation on the CLF constraint in the program for the sake of its feasibility. In fact, \cite{reis2020control} shows that even for a single integrator dynamics with a circular obstacle, the program could induce non-origin equilibrium points that are locally stable. This is not desirable in a  performance-critical task\cite{xiao2021high}, where precise stabilization is also essential for the task completion. For example, in a spacecraft docking mission, while inter-collision avoidance guarantees safety, the mission would fail if the orientation of the spacecraft is not regulated precisely.
 
 There are several endeavours in the literature to achieve safe and precise stabilization  with control barrier functions.   Intuitively speaking, this is challenging because of the modular design nature, i.e., the CBF (safety) and the CLF (stability) that are designed independently could be conflicting.
 In \cite{jankovic2018robust}, local asymptotic stability is proved by a modified quadratic program assuming that the CBF constraint is inactive around the origin.  \cite{cortez2020compatibility} discusses the compatibility between the CLF and the CBF, and a sufficient  condition on the regions of attraction is proposed. The condition is however conservative and checking such conditions for general nonlinear systems remains challenging.   In our previous work \cite{xiao2021high}, by modifying a CBF candidate, the nominal control law, which can be derived from a CLF, can be implemented without any modification in an \textit{a priori} given region inside the safety set, and thus local stability follows. Yet the possible existence of  undesired equilibria is not ruled out. \cite{reis2020control} introduces an extra CBF constraint  to the original QP  which aims to remove boundary equilibria in the original QP formulation; however, the feasibility of the modified QP is only assumed. 
 
  

 In this paper we start from characterizing the existence and locations of all possible equilibrium points under a control barrier function-induced quadratic program-based controller in the closed-loop. While partial results have been reported before, here only the existence of a CLF and a CBF is assumed, removing other assumptions found in previous works. Analytical results on how the design parameter affects the equilibrium points are also discussed.  We then present a modified control barrier function-induced quadratic program, which, without any further assumptions, simultaneously guarantees the forward invariance of the safety set, the complete elimination of undesired equilibrium points in the interior of the safety set, the complete elimination of one type of boundary equilibrium points, and the local asymptotic stability of the origin. We note that the latter three properties are new compared to the previous formulation in \cite{Xu2015a,Ames2017,Ames2019control}.



\section{Preliminary}

\textit{Notation}: The operator $\nabla:C^1(\mathbb{R}^n) \to \mathbb{R}^n$ is defined as the gradient $\frac{\partial}{\partial x}$ of a scalar-valued differentiable function with respect to $\myvar{x}$. The Lie derivatives of a function $h(\myvar{x})$ for the system $\myvardot{x} = \myvarfrak{f}(\myvar{x}) + \myvarfrak{g}(\myvar{x}) \myvar{u}$ are denoted by $L_{\mathfrak{f}} h = \nabla h^\top \myvarfrak{f}(\myvar{x}) \in \mathbb{R}$ and $L_{\mathfrak{g}} h = \nabla h^\top \myvarfrak{g}(\myvar{x}) \in \mathbb{R}^{1\times m}$, respectively.  The interior and boundary of a set $\myset{A}$ are denoted $\text{Int}(\myset{A})$ and $\partial \myset{A}$, respectively. A continuous function $\alpha:[0,a) \to [0,\infty)$ for $a \in \mathbb{R}_{>0}$ is a \textit{class $\mathcal{K}$ function} if it is strictly increasing and $\alpha(0) = 0$ \cite{Khalil2002}. $\alpha:[0,\infty) \to [0,\infty)$ is called a \textit{class $\mathcal{K}_{\infty}$ function} if it is a class $\mathcal{K}$ function and $\alpha(\infty) = \infty$. 


Consider the nonlinear control affine system
\begin{equation} \label{eq:nonlinear_dyn}
    \myvardot{x} = \myvarfrak{f}(\myvar{x}) + \myvarfrak{g}(\myvar{x}) \myvar{u},
\end{equation}
where the state $\myvar{x} \in \mathbb{R}^n$, and the control input $ \myvar{u} \in \mathbb{R}^m$.   We will consider the case where $\myvarfrak{f}(\myvar{x})$ and  $\myvarfrak{g}(\myvar{x})$  are locally Lipschitz functions in $\myvar{x}$.
Denote by $ \myvar{x}(t,\myvar{x}_0) $ the solution of $ \eqref{eq:nonlinear_dyn} $ starting from $\myvar{x}(t_0) = \myvar{x}_0$. By standard ODE theory\cite{birkhoff1978ordinary}, \bluetext{if $\myvar{u}(\myvar{x})$ is locally Lipschitz, then} there exists a maximal time interval of existence $I(\myvar{x}_0 )$ and $\myvar{x}(t,\myvar{x}_0)$ is the unique solution to the differential equation \eqref{eq:nonlinear_dyn} for all $t \in I(\myvar{x}_0 ), \myvar{x}_0\in \mathbb{R}^n$. A set $\myset{A} \subset \mathbb{R}^n$ is called \textit{forward invariant}, if for any initial condition $ \myvar{x}_0 \in \myset{A}  $, $\myvar{x}(t,\myvar{x}_0)  \in \myset{A}$ for all $t \in I(\myvar{x}_0 )$. 



\begin{definition}[Extended class $\mathcal{K}$ function  \cite{Ames2017}] \label{def:extended class K}
A continuous function $\alpha:(-b,a) \to (-\infty,\infty)$ for $a,b \in \mathbb{R}_{>0}$ is an \textit{extended class $\mathcal{K}$ function} if it is strictly increasing and $\alpha(0) = 0$.
\end{definition} 
\noindent Note that the extended class $\mathcal{K}$ functions addressed in this paper will be defined for $a,b = \infty$.

\begin{definition}[CLF] \label{def:clf}
A \bluetext{smooth} positive definite function $V:\mathbb{R}^n \to \mathbb{R}$ is a control Lyapunov function (CLF) for system \eqref{eq:nonlinear_dyn} if it  satisfies:
\begin{equation} \label{eq:clf}
    \inf_{\myvar{u} \in \mathbb{R}^m} [L_{\myvarfrak{f}} V(\myvar{x}) + L_{\myvarfrak{g}}V(\myvar{x})\myvar{u}] \leq - \gamma(V(\myvar{x})), \ \forall \myvar{x}\in \mathbb{R}^n,
\end{equation}
where $\gamma: \mathbb{R}_{\ge 0} \to \mathbb{R}_{\ge 0}$ is a class $\mathcal{K}$ function.
\end{definition}

Consider the safety set $\myset{C}$ defined as a superlevel set of a \bluetext{smooth} function $h:\mathbb{R}^n \to \mathbb{R}$:
\begin{equation} \label{eq:set_c}
    \myset{C} = \{ \myvar{x}\in \mathbb{R}^n: h(\myvar{x})\ge 0 \}.
\end{equation}

\begin{definition}[CBF] \label{def:cbf}
Let set $\myset{C}$ be defined by \eqref{eq:set_c}. $h(\myvar{x})$ is a control barrier function (CBF) for system \eqref{eq:nonlinear_dyn} if there exists a locally Lipschitz extended class $\mathcal{K}$ function $\alpha$ such that:
\begin{equation} \label{eq:cbf}
    \sup_{\myvar{u} \in \mathbb{R}^m} [L_{\myvarfrak{f}}h(\myvar{x}) + L_{\myvarfrak{g}}h(\myvar{x})\myvar{u} + \alpha(h(\myvar{x}))] \ge 0, \ \forall \myvar{x}\in \mathbb{R}^n
\end{equation}
\end{definition}

In \cite{Xu2015a}, the CBF $h(\myvar{x})$ is defined over an open set $\myset{D}$ containing the safety set $\myset{C}$. Here we instead require the CBF condition to hold in $\mathbb{R}^n$ for notational simplicity. All the results in this paper remain intact even when $h(\myvar{x})$  is defined only over an open set $\myset{D}$, except that a set intersection operation with $\myset{D}$ is needed for all the sets of states in the following derivations.

Without loss of generality, \bluetext{ we say the  origin is the desired equilibrium point if it is indeed an equilibrium point of the controlled system; otherwise, there exists no desired equilibrium point. We do not assume $ \myvarfrak{f}(\myvar{0}) = \myvar{0}$.}  All the other equilibrium points are referred to as the undesired equilibrium points. We assume the following Assumption holds throughout the paper.

\vspace{2mm}
\begin{assumption} \label{ass:CLF_CBF}
The system \eqref{eq:nonlinear_dyn} is assumed to admit a CLF $V(\myvar{x})$ and a CBF $h(\myvar{x})$, \bluetext{and the origin is assumed to be} in $\textup{Int}(\myset{C})$.
\end{assumption}

\subsection{Quadratic Program Formulation}
The minimum-norm controller proposed in \cite{Xu2015a} is given by the following quadratic program with a positive scalar $p$:
\begin{align} 
    & \ \min_{(\myvar{u},\delta)\in \mathbb{R}^{m+1}} \frac{1}{2} \| \myvar{u} \|^2 + \frac{1}{2}p\delta^2 \label{eq:original_QP} \\
    s.t. \ & L_{\myvarfrak{f}} V(\myvar{x}) + L_{\myvarfrak{g}}V(\myvar{x})\myvar{u} +\gamma(V(\myvar{x})) \le \delta, \tag{CLF}\\
    & L_{\myvarfrak{f}}h(\myvar{x}) + L_{\myvarfrak{g}}h(\myvar{x})\myvar{u} + \alpha(h(\myvar{x})) \ge 0, \tag{CBF}
\end{align}
which softens the stabilization objective via the slack variable $\delta$, and thus maintains the feasibility of the QP, i.e., if $h(\myvar{x})$ is a CBF, then the quadratic program in \eqref{eq:original_QP} is always feasible. A controller $\myvar{u}(\myvar{x})$  given by the quadratic program satisfies the CBF constraint for all $\myvar{x} \in \mathbb{R}^n$. \bluetext{If $\myvar{u}(\myvar{x})$ is locally Lipschitz, then} the safety set $\myset{C}$ is forward invariant using Brezis' version of Nagumo's Theorem\cite{xiao2021high}. However, due to the relaxation in the CLF constraint, the stabilization of the system \eqref{eq:nonlinear_dyn} is generally not guaranteed.


\section{ Closed-loop system behavior} \label{sec:analysis_cls}
In this section, we investigate the point-wise solution to the quadratic program in \eqref{eq:original_QP}, the equilibrium points of the closed-loop system, and the choice of the QP parameter $p$ in \eqref{eq:original_QP}. Hereafter we denote the control input given as a solution of \eqref{eq:original_QP} as $\myvar{u}^\star(\myvar{x})$ and the closed-loop vector field $\myvarfrak{f}_{cl}(\myvar{x}) := \myvarfrak{f}(\myvar{x}) + \myvarfrak{g}(\myvar{x}) \myvar{u}^\star(\myvar{x})$. Note that here we merely assume the existence of a CLF and a CBF, thus  remove the assumptions that $\myvarfrak{g}$ is full rank as in \cite{reis2020control} or $L_{\myvarfrak{g}}h \neq \myvar{0}, \forall \myvar{x}\in \mathbb{R}^n$ as in \cite{Xu2015a,Ames2017}.


\subsection{Explicit solution to the quadratic program} \label{sec:exp_sol}

\begin{theorem} \label{thm:exact_qp_solution}
The solution to the quadratic program in \eqref{eq:original_QP} is given by 
\begin{equation} \label{eq:qp_solution}
    \myvar{u}^\star(\myvar{x}) = \left\{ \begin{array}{ll}
      \myvar{0},   & \myvar{x} \in \Omega^{\overline{clf}}_{\overline{cbf}}\cup \Omega^{\overline{clf}}_{cbf,1}, \\
       - \frac{F_h} {L_{\myvarfrak{g}}h L_{\myvarfrak{g}}h^\top}  L_{\myvarfrak{g}}h^\top,  &  \myvar{x} \in \Omega^{\overline{clf}}_{cbf,2}, \\
      - \frac{F_V}{(1/p + L_{\myvarfrak{g}}V L_{\myvarfrak{g}}V^\top)} L_{\myvarfrak{g}}V^\top,   & \myvar{x} \in \Omega^{clf}_{\overline{cbf}} \cup \Omega^{clf}_{cbf,1}, \\
         -v_1 L_{\myvarfrak{g}}V^\top  + v_2 L_{\myvarfrak{g}}h^\top,   & \myvar{x} \in \Omega^{clf}_{cbf,2},
    \end{array} \right.
\end{equation}
where $F_V(\myvar{x}): =  L_{\myvarfrak{f}} V(\myvar{x})  +\gamma(V(\myvar{x}))  $,  $F_h(\myvar{x}) :=  L_{\myvarfrak{f}} h(\myvar{x})  +\alpha(h(\myvar{x})) $, $
     \begin{bsmallmatrix}
    v_1 \\
    v_2
    \end{bsmallmatrix}: = \begin{bsmallmatrix}
    1/p + L_{\myvarfrak{g}}V L_{\myvarfrak{g}}V^\top & -L_{\myvarfrak{g}}V L_{\myvarfrak{g}}h^\top \\
    -L_{\myvarfrak{g}}V L_{\myvarfrak{g}}h^\top  & L_{\myvarfrak{g}}h L_{\myvarfrak{g}}h^\top
    \end{bsmallmatrix}^{-1} \begin{bsmallmatrix}
    F_V \\
    -F_h
    \end{bsmallmatrix}$, and the domain sets are given by
\begin{align}
    & \Omega^{\overline{clf}}_{\overline{cbf}}   = \{ \myvar{x} \in \mathbb{R}^n: F_V < 0, F_h >0 \}, \label{eq:domain_clfin_cbfin} \\
     & \Omega^{\overline{clf}}_{cbf,1}   = \{ \myvar{x} \in \mathbb{R}^n: F_V < 0, F_h = 0,  L_{\myvarfrak{g}}h = \myvar{0} \},     \label{eq:domain_clfin_cbfon_1} \\
     & \Omega^{\overline{clf}}_{cbf,2}   = \{ \myvar{x} \in \mathbb{R}^n:   F_h \le 0, \nonumber \\ 
     &  \hspace{2cm}   F_V L_{\myvarfrak{g}}h L_{\myvarfrak{g}}h^\top- F_h  L_{\myvarfrak{g}}V L_{\myvarfrak{g}}h^\top < 0\},  \label{eq:domain_clfin_cbfon} \\
& \Omega^{clf}_{\overline{cbf}} = \{ \myvar{x} \in \mathbb{R}^n: F_V \ge 0, \nonumber \\ 
& \hspace{0.7cm}  F_V L_{\myvarfrak{g}} h L_{\myvarfrak{g}} V^\top  - F_h (1/p + L_{\myvarfrak{g}}V L_{\myvarfrak{g}}V^\top)  < 0\}, \label{eq:domain_clfon_cbfin} \\
& \Omega^{clf}_{cbf,1} = \{ \myvar{x} \in \mathbb{R}^n:  F_V \ge 0, F_h = 0, L_{\myvarfrak{g}}h = 0 \}, \label{eq:domain1_clfon_cbfon} \\
& \Omega^{clf}_{cbf,2} = \{ \myvar{x} \in \mathbb{R}^n: F_V L_{\myvarfrak{g}}h L_{\myvarfrak{g}}h^\top - F_h L_{\myvarfrak{g}}V L_{\myvarfrak{g}}h^\top \ge 0, \nonumber \\ 
& \hspace{3mm} F_V L_{\myvarfrak{g}}V L_{\myvarfrak{g}}h^\top - F_h (1/p + L_{\myvarfrak{g}}V L_{\myvarfrak{g}}V^\top) \ge 0, L_{\myvarfrak{g}}h \neq \myvar{0} \}.         \label{eq:domain2_clfon_cbfon}
\end{align}

\end{theorem}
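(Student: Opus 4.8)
The plan is to treat \eqref{eq:original_QP} as a strictly convex quadratic program and solve it through its Karush--Kuhn--Tucker (KKT) conditions. First I would record that the objective $\frac{1}{2}\|\myvar{u}\|^2 + \frac{1}{2}p\delta^2$ is strictly convex in $(\myvar{u},\delta)$, since its Hessian $\mathrm{diag}(I_m,p)$ is positive definite for $p>0$, while both constraints are affine in $(\myvar{u},\delta)$. The program is feasible: the slack $\delta$ makes the CLF constraint satisfiable for any $\myvar{x}$, and the CBF property guarantees the CBF constraint is satisfiable. Hence the minimizer $(\myvar{u}^\star,\delta^\star)$ exists and is unique, and, the constraints being affine, the KKT conditions are both necessary and sufficient for optimality.

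Introducing multipliers $\lambda_V,\lambda_h\ge 0$ for the CLF and CBF constraints, stationarity of the Lagrangian yields $\myvar{u} = -\lambda_V L_{\myvarfrak{g}}V^\top + \lambda_h L_{\myvarfrak{g}}h^\top$ and $p\delta=\lambda_V$, while complementary slackness links each multiplier to its constraint. The heart of the proof is then a case split over the active set: (i) neither constraint active, (ii) CLF active only, (iii) CBF active only, and (iv) both active. In each case I substitute the stationarity expression for $\myvar{u}$ into the equalities enforced by the active constraints and solve for the multipliers in closed form.

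Cases (i)--(iii) are scalar computations. Case (i) forces $\lambda_V=\lambda_h=0$, hence $\myvar{u}=\myvar{0}$; case (ii) gives $\lambda_V = F_V/(1/p + L_{\myvarfrak{g}}V L_{\myvarfrak{g}}V^\top)$, i.e.\ the third branch; case (iii) gives $\lambda_h = -F_h/(L_{\myvarfrak{g}}h L_{\myvarfrak{g}}h^\top)$, i.e.\ the second branch. Case (iv) reduces to the $2\times 2$ linear system whose coefficient matrix is exactly the one appearing in the theorem, so $(\lambda_V,\lambda_h)=(v_1,v_2)$ and $\myvar{u}$ is the fourth branch. In every case the region on which that branch is optimal is obtained by imposing dual feasibility ($\lambda_V,\lambda_h\ge0$) together with primal feasibility of the inactive constraint; clearing the positive denominators turns these into the sign inequalities defining the corresponding $\Omega$. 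For case (iv) I would first note that the coefficient matrix is positive definite whenever $L_{\myvarfrak{g}}h\neq\myvar{0}$, its determinant being $\frac{1}{p}\|L_{\myvarfrak{g}}h\|^2 + \big(\|L_{\myvarfrak{g}}V\|^2\|L_{\myvarfrak{g}}h\|^2-(L_{\myvarfrak{g}}V L_{\myvarfrak{g}}h^\top)^2\big)>0$ by Cauchy--Schwarz, so that the inverse exists and the conditions $v_1\ge0$, $v_2\ge0$ reproduce the two inequalities in \eqref{eq:domain2_clfon_cbfon}.

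The step requiring the most care is the degenerate direction $L_{\myvarfrak{g}}h=\myvar{0}$, where the CBF constraint no longer depends on $\myvar{u}$ and therefore cannot be activated by the input; it must instead be imposed as a standing feasibility requirement, and the CBF property forces $F_h\ge 0$ at such points. This is precisely why these states split off into the separate subregions $\Omega^{\overline{clf}}_{cbf,1}$ and $\Omega^{clf}_{cbf,1}$, on which the CBF constraint is inert and $\myvar{u}^\star$ coincides with the CLF-only branches. I would conclude by verifying that the six regions are exhaustive and agree on their shared boundaries (where the strict inequalities become equalities), so that the piecewise formula \eqref{eq:qp_solution} defines $\myvar{u}^\star(\myvar{x})$ consistently on all of $\mathbb{R}^n$.
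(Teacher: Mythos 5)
Your proposal is correct and follows essentially the same route as the paper's proof: KKT conditions for the strictly convex QP, a case split over the active set, closed-form solution of the multipliers (scalar in the single-constraint cases, the $2\times 2$ system when both are active), and recovery of each domain $\Omega$ from dual feasibility plus primal feasibility of the inactive constraint, with the degenerate direction $L_{\myvarfrak{g}}h=\myvar{0}$ split off exactly as the paper does via its sub-cases. The only cosmetic differences are that you make explicit the strict convexity/uniqueness and KKT sufficiency argument (which the paper leaves implicit) and you organize the $L_{\myvarfrak{g}}h=\myvar{0}$ states as a standing feasibility requirement rather than as sub-cases of the active-CBF branches; both lead to the same regions and formulas.
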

Before diving into the proof, we note that for the domain sets in \eqref{eq:qp_solution}, a bar being in place refers to the inactivity of the corresponding constraint. The subscript $\mathit{cbf},1$ refers to the case when the CBF constraint is active and $L_{\myvarfrak{g}}h = \myvar{0}$, while $\mathit{cbf},2$ refers to the case when the CBF constraint is active and $L_{\myvarfrak{g}}h \neq \myvar{0}$. 
\begin{proof}
    The Lagrangian associated to the QP \eqref{eq:original_QP} is $    \mathcal{L} = \frac{1}{2}\| \myvar{u} \|^2 + \frac{1}{2}p\delta^2 + \lambda_1(F_V + L_{\myvarfrak{g}}V \myvar{u} - \delta) -\lambda_2(F_h +L_{\myvarfrak{g}}h \myvar{u} ).$
Here $\lambda_1 \ge 0$ and $\lambda_2 \ge 0$ are the Lagrangian multipliers. The Karush-Kuhn-Tucker (KKT) conditions are
\begin{align}
    \frac{\partial\mathcal{L}}{\partial \myvar{u}} = \myvar{u} + \lambda_1 L_{\myvarfrak{g}}V^\top -\lambda_2 L_{\myvarfrak{g}}h^\top &= 0 , \label{eq:kkt_partial_u}\\
    \frac{\partial\mathcal{L}}{\partial \delta} = p\delta - \lambda_1 &=0 , \label{eq:kkt_partial_delta}\\
    \lambda_1(F_V + L_{\myvarfrak{g}}V \myvar{u}-\delta) & = 0, \label{eq:kkt_duality1}\\
    \lambda_2(F_h +L_{\myvarfrak{g}}h \myvar{u} ) & = 0. \label{eq:kkt_duality2}
\end{align}

\item[\textit{Case 1}:] Both the CLF and CBF constraints are inactive. In this case, we have
    \begin{align}
        F_V + L_{\myvarfrak{g}}V(\myvar{x})\myvar{u} &< \delta, \label{eq:kkt_clfin_cbfin_V}\\
        F_h + L_{\myvarfrak{g}}h(\myvar{x})\myvar{u} &> 0, \label{eq:kkt_clfin_cbfin_h} \\
        \lambda_1 & = 0, \label{eq:kkt_clfin_cbfin_lam1} \\
        \lambda_2 & = 0. \label{eq:kkt_clfin_cbfin_lam2} 
    \end{align}
    From \eqref{eq:kkt_partial_delta}, $\delta = \lambda_1 / p = 0$. From  \eqref{eq:kkt_partial_u} and $\lambda_1 = \lambda_2 = 0$, 
    \begin{equation} \label{eq:u_clfin_cbfin}
        \myvar{u}^\star = \myvar{0}.
    \end{equation}
    
    
    To find out the domain where this case holds, substituting \eqref{eq:u_clfin_cbfin} into \eqref{eq:kkt_clfin_cbfin_V} and \eqref{eq:kkt_clfin_cbfin_h},  and further noting that $\delta = 0$, we obtain $\Omega^{\overline{clf}}_{\overline{cbf}} $ in \eqref{eq:domain_clfin_cbfin}.

    \item[\textit{Case 2}:] The CLF constraint is inactive and the CBF constraint is active. In this case, we have
    \begin{align}
        F_V + L_{\myvarfrak{g}}V(\myvar{x})\myvar{u} &< \delta, \label{eq:kkt_clfin_cbfon_V}\\
        F_h + L_{\myvarfrak{g}}h(\myvar{x})\myvar{u} &= 0, \label{eq:kkt_clfin_cbfon_h} \\
        \lambda_1 & = 0, \label{eq:kkt_clfin_cbfon_lam1} \\
        \lambda_2 & \ge 0. \label{eq:kkt_clfin_cbfon_lam2} 
    \end{align}
    From \eqref{eq:kkt_partial_delta}, $\delta = \lambda_1 / p = 0$. We consider the following two sub-cases. 
    \begin{enumerate}
        \item[1)] $L_{\myvarfrak{g}}h = \myvar{0}$.
        Note that $\lambda_1 = 0, L_{\myvarfrak{g}}h = \myvar{0} $, then from \eqref{eq:kkt_partial_u}, 
        \begin{equation} \label{eq:u_clfin_cbfon_1}
            \myvar{u}^{\star} = \myvar{0}.
        \end{equation}
        $\lambda_2$ could be any positive scalar. To obtain the domain where this case holds, substituting  \eqref{eq:u_clfin_cbfon_1} to \eqref{eq:kkt_clfin_cbfon_V} and \eqref{eq:kkt_clfin_cbfon_h} and noting that $\delta = 0$, we obtain $\Omega^{\overline{clf}}_{cbf,1}$ in  \eqref{eq:domain_clfin_cbfon_1}.

    \item[2)] $ L_{\myvarfrak{g}}h \neq \myvar{0}$. From  \eqref{eq:kkt_partial_u} and $\lambda_1 = 0$, $ L_{\myvarfrak{g}}h \myvar{u} - \lambda_2  L_{\myvarfrak{g}}h L_{\myvarfrak{g}}h^\top = 0$. From \eqref{eq:kkt_clfin_cbfon_h}, we further obtain $\lambda_2 = -F_h /L_{\myvarfrak{g}}h L_{\myvarfrak{g}}h^\top$, and, from \eqref{eq:kkt_partial_u},
    \begin{equation} \label{eq:u_clfin_cbfon}
        \myvar{u}^\star =- \frac{F_h} {L_{\myvarfrak{g}}h L_{\myvarfrak{g}}h^\top}  L_{\myvarfrak{g}}h^\top.
    \end{equation}
    
    
    To find out the domain where this case holds, substituting \eqref{eq:u_clfin_cbfon} into \eqref{eq:kkt_clfin_cbfon_V} and noting that $\delta = 0$, we obtain that the CLF constraint being inactive implies $F_V -  \frac{F_h} {L_{\myvarfrak{g}}h L_{\myvarfrak{g}}h^\top}  L_{\myvarfrak{g}}V L_{\myvarfrak{g}}h^\top < 0 $ and  the CBF constraint being active $\lambda_2 \ge 0$ implies $ F_h \le 0 $. Thus, we obtain $ \Omega^{\overline{clf}}_{cbf,2}$ in \eqref{eq:domain_clfin_cbfon}.
    
     \end{enumerate}

    \item[\textit{Case 3}:] The CLF constraint is active and the CBF constraint is inactive. In this case, we have 
    \begin{align}
        F_V + L_{\myvarfrak{g}}V(\myvar{x})\myvar{u} &= \delta, \label{eq:kkt_clfon_cbfin_V}\\
        F_h + L_{\myvarfrak{g}}h(\myvar{x})\myvar{u} &> 0, \label{eq:kkt_clfon_cbfin_h} \\
        \lambda_1 & \ge 0, \label{eq:kkt_clfon_cbfin_lam1} \\
        \lambda_2 & = 0. \label{eq:kkt_clfon_cbfin_lam2} 
    \end{align}
    From \eqref{eq:kkt_partial_u} and \eqref{eq:kkt_clfon_cbfin_lam2}, we obtain $\myvar{u} + \lambda_1 L_{\myvarfrak{g}} V^\top = 0$, thus $L_{\myvarfrak{g}} V \myvar{u} + \lambda_1 L_{\myvarfrak{g}} V L_{\myvarfrak{g}} V^\top = 0$. Substituting $L_{\myvarfrak{g}} V \myvar{u} = - \lambda_1 L_{\myvarfrak{g}} V L_{\myvarfrak{g}} V^\top $ into \eqref{eq:kkt_clfon_cbfin_V}, we obtain $        F_V - \lambda_1 L_{\myvarfrak{g}} V L_{\myvarfrak{g}} V^\top = \delta\stackrel{\eqref{eq:kkt_partial_delta}}{=} \lambda_1/p.$
    Thus we get 
    \begin{align}
        \lambda_1 &= (p^{-1} +L_{\myvarfrak{g}} V L_{\myvarfrak{g}} V^\top )^{-1}F_V \label{eq:cbf_off_lam1} \\
        \myvar{u}^\star &= - \lambda_1 L_{\myvarfrak{g}} V^\top = - \frac{F_V}{p^{-1} +L_{\myvarfrak{g}} V L_{\myvarfrak{g}} V^\top}L_{\myvarfrak{g}} V^\top \label{eq:cbf_off_u}
    \end{align}

    
    In the domain where this case holds, $\lambda_1 \ge 0$ and $F_h + L_{\myvarfrak{g}} h \myvar{u}^\star>0$. The former implies that $F_V \geq 0$ in view of \eqref{eq:cbf_off_lam1}; 
    the latter implies $F_h - \frac{F_V}{p^{-1} +L_{\myvarfrak{g}} V L_{\myvarfrak{g}} V^\top} L_{\myvarfrak{g}} h L_{\myvarfrak{g}} V^\top  >0 $, i.e.,  $ \Omega^{clf}_{\overline{cbf}}$ in \eqref{eq:domain_clfon_cbfin}.

    \item[\textit{Case 4}:] Both the CLF constraint and the CBF constraint are active. In this case, we have
    \begin{align}
        F_V + L_{\myvarfrak{g}}V(\myvar{x})\myvar{u} &= \delta, \label{eq:kkt_clfon_cbfon_V}\\
        F_h + L_{\myvarfrak{g}}h(\myvar{x})\myvar{u} &= 0, \label{eq:kkt_clfon_cbfon_h} \\
        \lambda_1 & \ge 0,  \\
        \lambda_2 & \ge 0. 
    \end{align}
   From \eqref{eq:kkt_partial_u}, \eqref{eq:kkt_partial_delta}, we obtain $\myvar{u} = -\lambda_1  L_{\myvarfrak{g}}V^\top + \lambda_2 L_{\myvarfrak{g}}h^\top $ and $\delta = \lambda_1/p$. Substituting $\myvar{u}$ and $\delta$ into \eqref{eq:kkt_clfon_cbfon_V}
, \eqref{eq:kkt_clfon_cbfon_h}, we obtain
\begin{equation} \label{eq:kkt_clfon_cbfon_linear_eq}
    \hspace{-4mm} \begin{bmatrix}
    1/p + L_{\myvarfrak{g}}V L_{\myvarfrak{g}}V^\top & -L_{\myvarfrak{g}}V L_{\myvarfrak{g}}h^\top \\
    -L_{\myvarfrak{g}}V L_{\myvarfrak{g}}h^\top  & L_{\myvarfrak{g}}h L_{\myvarfrak{g}}h^\top
    \end{bmatrix} \begin{bmatrix}
    \lambda_1 \\
    \lambda_2
    \end{bmatrix} = \begin{bmatrix}
    F_V \\
    -F_h
    \end{bmatrix}.
\end{equation}

Denote $\Delta := \det(\begin{bsmallmatrix}
    1/p + L_{\myvarfrak{g}}V L_{\myvarfrak{g}}V^\top & -L_{\myvarfrak{g}}V L_{\myvarfrak{g}}h^\top \\
    -L_{\myvarfrak{g}}V L_{\myvarfrak{g}}h^\top  & L_{\myvarfrak{g}}h L_{\myvarfrak{g}}h^\top
    \end{bsmallmatrix})$ for brevity.   Since $ \Delta = \| L_{\myvarfrak{g}}h \|^2/p + \| L_{\myvarfrak{g}}V \|^2 \| L_{\myvarfrak{g}}h \|^2 - (L_{\myvarfrak{g}}V L_{\myvarfrak{g}}h^\top)^2 $, and  $\|\myvar{x} \|^2 \| \myvar{y} \|^2 \ge (\myvar{x}^\top \myvar{y})^2,  \forall \myvar{x,y} \in \mathbb{R}^n$, we know that $\Delta = 0$ if and only if $L_{\myvarfrak{g}}h = \myvar{0}$ for any $p>0$. We discuss the solution to \eqref{eq:kkt_clfon_cbfon_linear_eq} in the following two sub-cases.
    \begin{enumerate}
        \item[1)] $L_{\myvarfrak{g}}h = \myvar{0}$. In this case, $\Delta = 0$. From \eqref{eq:kkt_clfon_cbfon_linear_eq}, we know
        \begin{equation} \label{eq:lam1_1_clfon_cbfon}
            \lambda_1 = F_V/(1/p + L_{\myvarfrak{g}}V L_{\myvarfrak{g}}V^\top),
        \end{equation}
         $\lambda_2$ could be any positive scalar, and $F_h = 0$. 
        Furthermore, in view of \eqref{eq:kkt_partial_u}, we obtain 
        \begin{equation} \label{eq:u1_clfon_cbfon}
            \myvar{u}^\star(\myvar{x}) = - 
        \frac{F_V}{(1/p + L_{\myvarfrak{g}}V L_{\myvarfrak{g}}V^\top)} L_{\myvarfrak{g}}V^\top,
        \end{equation}
        and, in view of \eqref{eq:kkt_partial_delta}, $\delta = \frac{F_V}{(1 + p L_{\myvarfrak{g}}V L_{\myvarfrak{g}}V^\top)}$. 
        
        In this subcase, we assumed that both the CLF and the CBF constraints are active and $L_{\myvarfrak{g}}h = 0$, which implies $\lambda_1 \ge 0, \lambda_2 \ge 0$. Note that $\lambda_1 \ge 0$ is equivalent to $F_V \ge 0$ in view of \eqref{eq:lam1_1_clfon_cbfon}. In view of \eqref{eq:kkt_clfon_cbfon_linear_eq} and $L_{\myvarfrak{g}}h = 0$, we obtain $F_h = 0$. Thus the domain where this subcase holds is  $ \Omega^{clf}_{cbf,1} $ in \eqref{eq:domain1_clfon_cbfon}.

        \item[2)] $L_{\myvarfrak{g}}h \neq \myvar{0}$. In this case, $\begin{bsmallmatrix}
    1/p + L_{\myvarfrak{g}}V L_{\myvarfrak{g}}V^\top & -L_{\myvarfrak{g}}V L_{\myvarfrak{g}}h^\top \\
    -L_{\myvarfrak{g}}V L_{\myvarfrak{g}}h^\top  & L_{\myvarfrak{g}}h L_{\myvarfrak{g}}h^\top
    \end{bsmallmatrix}$ is   positive definite (since $1/p + L_{\myvarfrak{g}}V L_{\myvarfrak{g}}V^\top > 0, \Delta >0$). We calculate $\begin{bsmallmatrix}
    1/p + L_{\myvarfrak{g}}V L_{\myvarfrak{g}}V^\top & -L_{\myvarfrak{g}}V L_{\myvarfrak{g}}h^\top \\
    -L_{\myvarfrak{g}}V L_{\myvarfrak{g}}h^\top  & L_{\myvarfrak{g}}h L_{\myvarfrak{g}}h^\top
    \end{bsmallmatrix}^{-1} = \Delta^{-1} \begin{bsmallmatrix}
    L_{\myvarfrak{g}}h L_{\myvarfrak{g}}h^\top & L_{\myvarfrak{g}}h L_{\myvarfrak{g}}V^\top \\
    L_{\myvarfrak{g}}h L_{\myvarfrak{g}}V^\top   & 1/p + L_{\myvarfrak{g}}V L_{\myvarfrak{g}}V^\top
    \end{bsmallmatrix} $. Thus, $\lambda_1$ and $\lambda_2$ are given by
    \begin{align}
        & \hspace{-10mm} \lambda_1 = \Delta^{-1} (F_V L_{\myvarfrak{g}}h L_{\myvarfrak{g}}h^\top - F_h L_{\myvarfrak{g}}V L_{\myvarfrak{g}}h^\top ), \label{eq:lam1_2_clfon_cbfon} \\
       & \hspace{-10mm} \lambda_2 = \Delta^{-1} (F_V L_{\myvarfrak{g}}V L_{\myvarfrak{g}}h^\top - F_h (1/p + L_{\myvarfrak{g}}V L_{\myvarfrak{g}}V^\top) ). \label{eq:lam2_2_clfon_cbfon} 
    \end{align}
    \end{enumerate}
From \eqref{eq:kkt_partial_u}, we obtain 
\begin{equation} \label{eq:u2_clfon_cbfon}
\myvar{u}^\star  = -\lambda_1 L_{\myvarfrak{g}}V^\top  + \lambda_2 L_{\myvarfrak{g}}h^\top,
\end{equation}
with $\lambda_1$ in \eqref{eq:lam1_2_clfon_cbfon} and $ \lambda_2$ in \eqref{eq:lam2_2_clfon_cbfon}. In the domain where this case holds,  $\lambda_1 \ge 0, \lambda_2 \ge 0$, and  $L_{\myvarfrak{g}}h \neq  0$, and it implies $ \Omega^{clf}_{cbf,2}$ in \eqref{eq:domain2_clfon_cbfon}.
\end{proof}

\bluetext{
\begin{remark}[Lipschitz continuity of \eqref{eq:qp_solution}]
In above analysis, since there may exist states where $L_gh(\myvar{x})= \myvar{0}$,  the controller $\myvar{u}^\star(\myvar{x})$ is not locally Lipschitz in general. One  example is given in \cite{morris2015continuity}. Nevertheless, this does not hinder us from analyzing $\myvar{u}^{\star}$ and the equilibrium points of the closed-loop system, due to the fact that they are obtained point-wise. More discussions on Lipschitz continuity are given in Proposition \ref{prop:local lipschitz} and Remark 2.
\end{remark}
}

\subsection{Existence and locations of equilibrium points} \label{sec:exi_equ}
It is known that the quadratic program in \eqref{eq:original_QP} will induce undesired equilibria for the closed-loop system \cite{reis2020control}. Here we revisit this problem without assuming $\myvarfrak{g}$ is full rank as in \cite{reis2020control} nor $L_{\myvarfrak{g}}h \neq 0, \forall \myvar{x}\in \mathbb{R}^n$ as in \cite{Xu2015a}.

\vspace{2mm}

\begin{theorem} \label{thm:equilibrium_points}
The set of equilibrium points of the system $\dot{\myvar{x}} = \myvarfrak{f}(\myvar{x}) + \myvarfrak{g}(\myvar{x}) \myvar{u}^\star(\myvar{x})$ with the controller $\myvar{u}^{\star}$ resulting from \eqref{eq:original_QP} is given by $\myset{E} = \myset{E}^{clf}_{\overline{cbf}} \cup \myset{E}^{clf}_{cbf,1} \cup \myset{E}^{clf}_{cbf,2}$, where 
\begin{align}
    \hspace{-4mm} &  \myset{E}^{clf}_{\overline{cbf}} = \{ \myvar{x}\in \Omega^{clf}_{\overline{cbf}} \cap \text{Int}(\myset{C}): \myvarfrak{f} = p\gamma(V) \myvarfrak{g} L_{\myvarfrak{g}}V^\top  \},
    \label{eq:set_e_clfon_cbfin}\\
    \hspace{-4mm} &  \myset{E}^{clf}_{cbf,1} =  \{ \myvar{x}\in \Omega^{clf}_{cbf,1} \cap \partial \myset{C}: \myvarfrak{f} = p\gamma(V)\myvarfrak{g} L_{\myvarfrak{g}}V^\top   \}, \label{eq:set_e_clfon_cbfon1} \\
      \hspace{-4mm} &\myset{E}^{clf}_{cbf,2} =  \{ \myvar{x}\in \Omega^{clf}_{cbf,2} \cap \partial \myset{C}: \myvarfrak{f} = \lambda_1\myvarfrak{g} L_{\myvarfrak{g}}V^\top - \lambda_2 \myvarfrak{g} L_{\myvarfrak{g}}h^\top  \}, \label{eq:set_e_clfon_cbfon2}
\end{align}
with $\lambda_1 $ given in \eqref{eq:lam1_2_clfon_cbfon} and $\lambda_2$ given in \eqref{eq:lam2_2_clfon_cbfon}.
\end{theorem}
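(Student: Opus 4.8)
The plan is to read off the equilibria directly from the closed-form controller \eqref{eq:qp_solution}: a point $\myvar{x}$ is an equilibrium exactly when $\myvarfrak{f}(\myvar{x}) + \myvarfrak{g}(\myvar{x})\myvar{u}^\star(\myvar{x}) = \myvar{0}$, so I would substitute the branch of $\myvar{u}^\star$ that is valid on each of the six domain sets of Theorem~\ref{thm:exact_qp_solution} and solve. The single observation that drives every case is that at an equilibrium the state is frozen, hence $\dot V = \nabla V^\top(\myvarfrak{f}+\myvarfrak{g}\myvar{u}^\star) = 0$ and $\dot h = \nabla h^\top(\myvarfrak{f}+\myvarfrak{g}\myvar{u}^\star) = 0$; equivalently $L_{\myvarfrak{f}}V = -L_{\myvarfrak{g}}V\,\myvar{u}^\star$ and $L_{\myvarfrak{f}}h = -L_{\myvarfrak{g}}h\,\myvar{u}^\star$. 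These two identities let me replace the drift Lie derivatives by the known control terms and collapse the otherwise messy algebra.

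First I would rule out every domain on which the CLF constraint is inactive, namely $\Omega^{\overline{clf}}_{\overline{cbf}}$, $\Omega^{\overline{clf}}_{cbf,1}$ and $\Omega^{\overline{clf}}_{cbf,2}$. On these sets the KKT analysis of Theorem~\ref{thm:exact_qp_solution} gives $\delta = 0$ together with the strict inequality $F_V + L_{\myvarfrak{g}}V\,\myvar{u}^\star < 0$. Using $L_{\myvarfrak{f}}V = -L_{\myvarfrak{g}}V\,\myvar{u}^\star$ and $F_V = L_{\myvarfrak{f}}V + \gamma(V)$, the left-hand side equals $\gamma(V)$, which is nonnegative since $\gamma$ is class $\mathcal{K}$ and $V\ge 0$. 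This contradicts strict negativity, so no equilibrium lives in a CLF-inactive domain and $\myset{E}$ is contained in the three CLF-active sets.

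On $\Omega^{clf}_{\overline{cbf}}$ and $\Omega^{clf}_{cbf,1}$ the controller is $\myvar{u}^\star = -\beta\, L_{\myvarfrak{g}}V^\top$ with $\beta = F_V/(1/p + L_{\myvarfrak{g}}V L_{\myvarfrak{g}}V^\top)$, so the equilibrium condition reads $\myvarfrak{f} = \beta\,\myvarfrak{g}L_{\myvarfrak{g}}V^\top$. The key step is to show $\beta = p\gamma(V)$: from $L_{\myvarfrak{f}}V = -L_{\myvarfrak{g}}V\myvar{u}^\star = \beta\, L_{\myvarfrak{g}}V L_{\myvarfrak{g}}V^\top$ and $F_V = L_{\myvarfrak{f}}V + \gamma(V)$, equating with $F_V = \beta(1/p + L_{\myvarfrak{g}}V L_{\myvarfrak{g}}V^\top)$ cancels the $L_{\myvarfrak{g}}V L_{\myvarfrak{g}}V^\top$ term and leaves $\beta/p = \gamma(V)$. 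This yields $\myvarfrak{f} = p\gamma(V)\,\myvarfrak{g}L_{\myvarfrak{g}}V^\top$, matching \eqref{eq:set_e_clfon_cbfin} and \eqref{eq:set_e_clfon_cbfon1}, and the same computation run backwards supplies the converse inclusion. On $\Omega^{clf}_{cbf,2}$ the controller is $-\lambda_1 L_{\myvarfrak{g}}V^\top + \lambda_2 L_{\myvarfrak{g}}h^\top$ with $\lambda_1,\lambda_2$ from \eqref{eq:lam1_2_clfon_cbfon} and \eqref{eq:lam2_2_clfon_cbfon}, so the equilibrium condition is literally $\myvarfrak{f} = \lambda_1\myvarfrak{g}L_{\myvarfrak{g}}V^\top - \lambda_2\myvarfrak{g}L_{\myvarfrak{g}}h^\top$, i.e. \eqref{eq:set_e_clfon_cbfon2}, with no further simplification.

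Finally I would pin down the location of each equilibrium using $\dot h = 0$. Where the CBF constraint is inactive ($\Omega^{clf}_{\overline{cbf}}$), the strict inequality $F_h + L_{\myvarfrak{g}}h\myvar{u}^\star > 0$ combined with $L_{\myvarfrak{f}}h = -L_{\myvarfrak{g}}h\myvar{u}^\star$ gives $\alpha(h) > 0$, hence $h > 0$ and $\myvar{x}\in\text{Int}(\myset{C})$; where it is active ($\Omega^{clf}_{cbf,1},\Omega^{clf}_{cbf,2}$), the same manipulation gives $\alpha(h) = 0$, hence $h = 0$ and $\myvar{x}\in\partial\myset{C}$. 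I expect the main obstacle to be bookkeeping rather than depth: keeping the six branches and their defining inequalities straight, correctly handling the degenerate directions where $L_{\myvarfrak{g}}h = \myvar{0}$, and verifying that the collapse $\beta = p\gamma(V)$ is a genuine equivalence so that the description of $\myset{E}$ is exact rather than merely an inclusion.
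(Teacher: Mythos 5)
Your proposal is correct and follows essentially the same route as the paper's proof: both rule out the CLF-inactive domains via $\dot V=0$ at an equilibrium forcing $\gamma(V)<0$, both identify the coefficient $F_V/(1/p+L_{\myvarfrak{g}}V L_{\myvarfrak{g}}V^\top)$ with $p\gamma(V)$ at equilibria (the paper via $\lambda_1=p\delta$ and $\delta=\gamma(V)$, you by direct cancellation), and both use activity of the CBF constraint together with $\dot h=0$ to place equilibria in $\text{Int}(\myset{C})$ or on $\partial\myset{C}$. The only cosmetic difference is that the paper packages the localization argument as standalone Facts 2--3 before the case analysis, whereas you fold it into each case.
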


\begin{proof} 
We first show the following facts.

\noindent \textit{Fact 1}:  No equilibrium points exist when the CLF constraint in \eqref{eq:original_QP} is inactive.

 Consider the case when the CLF constraint is inactive, meaning $F_V + L_{\myvarfrak{g}}V \myvar{u} -\delta < 0 $ and $\lambda_1 = 0$. The equilibrium condition $\myvarfrak{f}_{cl}(\myvar{x}) = 0$ implies that $L_{\myvarfrak{f}_{cl}} V = 0$, thus $F_V + L_{\myvarfrak{g}}V \myvar{u} -\delta = \gamma(V) - \delta <0$. From \eqref{eq:kkt_partial_delta}, $\delta = \lambda_1 / p = 0$. Then $V(\myvar{x}) <0$, which is a contradiction since $V(\myvar{x})$ is positive definite. This is an expected conclusion since no equilibrium points at which $\dot{V}(\myvar{x})<0$ exist.  

 \noindent \textit{Fact 2}:  No equilibrium point exists in $\mathbb{R}^n \setminus \myset{C}$.
 
At an equilibrium point $\myvar{x}_{eq}$, the CBF constraint is simplified as $\alpha(h) \ge 0$, implying that the point does not lie outside of the set $\myset{C}$. This is also quite intuitive because the integral curves starting from any states outside the set $\myset{C}$ will asymptotically approach the set $\myset{C}$ so no equilibrium points exist there.

 \noindent \textit{Fact 3}:  Consider an equilibrium point $\myvar{x}_{eq}$. Then $ \myvar{x}_{eq} \in \partial \myset{C}$ if and only if the CBF constraint is active at that point.
 
 \textit{Sufficiency:} In view that the CBF constraint is active at $\myvar{x}_{eq}$, we have $L_{\myvarfrak{f}}h(\myvar{x}_{eq}) + L_{\myvarfrak{g}}h(\myvar{x}_{eq})\myvar{u} + \alpha(h(\myvar{x}_{eq})) =L_{\myvarfrak{f}_{cl}}h(\myvar{x}_{eq})+ \alpha(h(\myvar{x}_{eq})) = 0 $. Note that $\myvar{x}_{eq}$ is an equilibrium point, i.e, $\myvarfrak{f}_{cl}(\myvar{x}_{eq}) = \myvar{0}$, thus  $\alpha(h(\myvar{x}_{eq})) = 0$, which implies $\myvar{x}_{eq} \in \partial \myset{C}$.  \textit{Necessity:} Since $\myvar{x}_{eq}$ is an equilibrium point and $\myvar{x}_{eq}\in \partial \myset{C}$, i.e., $h(\myvar{x}_{eq}) = 0$, we obtain $L_{\myvarfrak{f}}h(\myvar{x}_{eq}) + L_{\myvarfrak{g}}h(\myvar{x}_{eq})\myvar{u} + \alpha(h(\myvar{x}_{eq})) =L_{\myvarfrak{f}_{cl}}h(\myvar{x}_{eq})+ \alpha(h(\myvar{x}_{eq})) = 0 $, i.e., the CBF constraint is active.  \newline
 
 From Fact 1, we know that the the equilibrium points can only exist when the CLF constraint is active, i.e., in the sets $\Omega^{clf}_{\overline{cbf}}$, $\Omega^{clf}_{cbf,1}$  and $\Omega^{clf}_{cbf,2}$. Furthermore,  the equilibrium points need to satisfy
 \begin{equation} \label{eq:closed_loop_equilibrium_points_condition}
     \myvarfrak{f}_{cl} =  \myvarfrak{f} + \myvarfrak{g}\myvar{u}^\star  = \myvar{0}.
 \end{equation}
  In the following we will discuss these three cases.

\noindent \textit{Case 1:}  Equilibrium points in  $\Omega^{clf}_{\overline{cbf}}$.     Substituting $\myvar{u}^\star(\myvar{x})$  in \eqref{eq:qp_solution} with $\myvar{x} \in \Omega^{clf}_{\overline{cbf}}$ into \eqref{eq:closed_loop_equilibrium_points_condition}, we obtain $  \myvarfrak{f}  = \frac{F_V}{p^{-1} +L_{\myvarfrak{g}} V L_{\myvarfrak{g}} V^\top} \myvarfrak{g} L_{\myvarfrak{g}} V^\top.$
In view of the facts that $\lambda_1 = \frac{F_V}{p^{-1} +L_{\myvarfrak{g}} V L_{\myvarfrak{g}} V^\top} $ in \eqref{eq:cbf_off_lam1},   $\lambda_1 = p \delta$ in \eqref{eq:kkt_partial_delta} and  $\delta =  \gamma(V)$ (as the CLF constraint is active and $\myvarfrak{f}_{cl} = \myvar{0}$), we can also characterize the equilibrium points to be $\myvarfrak{f} = p\gamma(V)\myvarfrak{g}L_{\myvarfrak{g}}V^\top$.

 
From Fact 2, we know that the equilibrium points can only be on the boundary or in the interior of the set $\myset{C}$. From Fact 3, equilibrium points lying on $\partial \myset{C}$ implies that the CBF constraint is active, thus the equilibrium points in this case lie in the interior of the set $\myset{C}$, as given in \eqref{eq:set_e_clfon_cbfin}.

 \noindent \textit{Case 2:} Equilibrium points in $\Omega^{clf}_{cbf,1}$.  Substituting $\myvar{u^\star}(\myvar{x})$ in \eqref{eq:qp_solution} into \eqref{eq:closed_loop_equilibrium_points_condition}, we obtain $ \myvarfrak{f}  = \frac{F_V}{p^{-1} +L_{\myvarfrak{g}} V L_{\myvarfrak{g}} V^\top} \myvarfrak{g} L_{\myvarfrak{g}} V^\top .$
Noting that $\lambda_1 = \frac{F_V}{p^{-1} +L_{\myvarfrak{g}} V L_{\myvarfrak{g}} V^\top} $ in \eqref{eq:lam1_1_clfon_cbfon},  $\lambda_1 = p \delta$ in \eqref{eq:kkt_partial_delta} and $ \delta = \gamma(V)$ (as the CLF constraint is active and $\myvarfrak{f}_{cl} = \myvar{0}$), we can also characterise the equilibrium points to be $\myvarfrak{f} = p \gamma(V)\myvarfrak{g} L_{\myvarfrak{g}} V^\top$. From Fact 3, we know that the equilibrium points lie on $\partial \myset{C}$, as given in \eqref{eq:set_e_clfon_cbfon1}.

\noindent \textit{Case 3:} Equilibrium points in $\Omega^{clf}_{cbf,2}$.    Substituting $\myvar{u^\star}(\myvar{x})$ in \eqref{eq:qp_solution} into \eqref{eq:closed_loop_equilibrium_points_condition}, we obtain $ \myvarfrak{f}  = \lambda_1 \myvarfrak{g} L_{\myvarfrak{g}} V^\top - \lambda_2 \myvarfrak{g} L_{\myvarfrak{g}} h^\top,$
where $\lambda_1 $ is in \eqref{eq:lam1_2_clfon_cbfon}  and   $\lambda_2$ in \eqref{eq:lam2_2_clfon_cbfon}. From Fact 3 and the CBF constraint being active, we know that the equilibrium points lie on $\partial \myset{C}$, as given in \eqref{eq:set_e_clfon_cbfon2}.\end{proof}

From Theorem \ref{thm:equilibrium_points}, we know that an equilibrium point either lies in $\textup{Int}(\myset{C})$ or $\partial \myset{C}$. We refer to these two types of equilibrium points as interior equilibria and boundary equilibria, respectively.

The following corollary is given in \cite{reis2020control}. Here we provide the proof for the sake of readability.

\vspace{2mm}
\begin{corollary} \label{col:origin}
The origin is an equilibrium point of the closed-loop system if and only if  $\myvarfrak{f}(\myvar{0}) = \myvar{0}$.
\end{corollary}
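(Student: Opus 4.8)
The plan is to prove both directions of the equivalence by specializing the equilibrium characterization already established in Theorem~\ref{thm:equilibrium_points}, evaluated at $\myvar{x} = \myvar{0}$. Recall from Assumption~\ref{ass:CLF_CBF} that $\myvar{0} \in \textup{Int}(\myset{C})$, so the origin is an interior point and $h(\myvar{0}) > 0$; this immediately restricts attention to the interior-equilibrium case. First I would observe that, since $V$ is positive definite with its unique minimum at the origin, we have $V(\myvar{0}) = 0$ and $\nabla V(\myvar{0}) = \myvar{0}$, hence $L_{\myvarfrak{g}}V(\myvar{0}) = \nabla V(\myvar{0})^\top \myvarfrak{g}(\myvar{0}) = \myvar{0}$. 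Likewise $\gamma(V(\myvar{0})) = \gamma(0) = 0$ because $\gamma$ is a class $\mathcal{K}$ function.

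For the forward direction, suppose the origin is an equilibrium point. By Fact~2 in the proof of Theorem~\ref{thm:equilibrium_points} it must lie in $\myset{C}$, and since $\myvar{0} \in \textup{Int}(\myset{C})$ the CBF constraint is inactive there (by Fact~3, an active CBF constraint forces the point onto $\partial\myset{C}$). Thus the only relevant case is $\myset{E}^{clf}_{\overline{cbf}}$ from \eqref{eq:set_e_clfon_cbfin}, whose defining relation is $\myvarfrak{f} = p\gamma(V)\,\myvarfrak{g} L_{\myvarfrak{g}}V^\top$. Evaluating at the origin and using $\gamma(V(\myvar{0})) = 0$ (or equivalently $L_{\myvarfrak{g}}V(\myvar{0}) = \myvar{0}$) makes the right-hand side vanish, yielding $\myvarfrak{f}(\myvar{0}) = \myvar{0}$.

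For the converse, suppose $\myvarfrak{f}(\myvar{0}) = \myvar{0}$. The plan is to verify directly that the closed-loop vector field $\myvarfrak{f}_{cl}(\myvar{0}) = \myvarfrak{f}(\myvar{0}) + \myvarfrak{g}(\myvar{0})\myvar{u}^\star(\myvar{0})$ vanishes. Since $L_{\myvarfrak{g}}V(\myvar{0}) = \myvar{0}$ and $\gamma(V(\myvar{0})) = 0$, we have $F_V(\myvar{0}) = L_{\myvarfrak{f}}V(\myvar{0}) + \gamma(V(\myvar{0})) = \nabla V(\myvar{0})^\top \myvarfrak{f}(\myvar{0}) + 0 = 0$. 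I would then inspect the explicit controller formula \eqref{eq:qp_solution}: in every branch that is active at the origin (the origin satisfies $F_V \ge 0$ with equality and $h(\myvar{0})>0$, so $F_h>0$, placing it in $\Omega^{\overline{clf}}_{\overline{cbf}}$ or $\Omega^{clf}_{\overline{cbf}}$), the control $\myvar{u}^\star$ is either $\myvar{0}$ or proportional to $L_{\myvarfrak{g}}V^\top = \myvar{0}$, hence $\myvar{u}^\star(\myvar{0}) = \myvar{0}$. Therefore $\myvarfrak{f}_{cl}(\myvar{0}) = \myvarfrak{f}(\myvar{0}) = \myvar{0}$, confirming the origin is an equilibrium.

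I expect the main subtlety to be bookkeeping about which domain set the origin falls into, rather than any genuine analytical obstacle: one must be careful that $F_V(\myvar{0}) = 0$ lies on the boundary between the CLF-inactive and CLF-active regions, so the controller could be assigned through either $\Omega^{\overline{clf}}_{\overline{cbf}}$ or $\Omega^{clf}_{\overline{cbf}}$. Fortunately both branches give $\myvar{u}^\star(\myvar{0}) = \myvar{0}$ once $L_{\myvarfrak{g}}V(\myvar{0}) = \myvar{0}$ is invoked, so the ambiguity is harmless. The cleanest exposition simply leans on the vanishing of $L_{\myvarfrak{g}}V$ and $\gamma(V)$ at the origin to collapse every term in the control law, making the equivalence transparent.
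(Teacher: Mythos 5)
Your proof is correct and follows essentially the same route as the paper: both directions reduce to the interior-equilibrium characterization, using $F_V(\myvar{0})=0$, $F_h(\myvar{0})=\alpha(h(\myvar{0}))>0$ to place the origin in $\Omega^{clf}_{\overline{cbf}}$, and the vanishing of $\gamma(V)$ and $L_{\myvarfrak{g}}V$ at the origin to collapse the equilibrium condition $\myvarfrak{f}=p\gamma(V)\myvarfrak{g}L_{\myvarfrak{g}}V^\top$. The only cosmetic difference is that for sufficiency you evaluate the explicit control law \eqref{eq:qp_solution} to get $\myvar{u}^\star(\myvar{0})=\myvar{0}$ directly, whereas the paper invokes the set $\myset{E}^{clf}_{\overline{cbf}}$ from Theorem~\ref{thm:equilibrium_points}; these are the same computation.
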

\begin{proof}
\textit{Sufficiency}: Since $\myvarfrak{f}(\myvar{0}) = \myvar{0}, V(\myvar{0}) = 0, h(\myvar{0}) >0$, then $F_V = 0, F_h = \alpha(h(\myvar{0})) >0$, thus $\myvar{0} \in \Omega^{clf}_{\overline{cbf}}$ from \eqref{eq:domain_clfon_cbfin}. Moreover, $ \myvarfrak{f} = p\gamma(V) \myvarfrak{g} L_{\myvarfrak{g}}V^\top = \myvar{0}$, from Theorem \ref{thm:equilibrium_points}, we conclude that $\myvar{0}$ is an equilibrium point of the closed-loop system. \textit{Necessity}: Since $\myvar{0}$ is an equilibrium point and $\myvar{0} \in \textup{Int}(\myset{C})$, from Theorem \ref{thm:equilibrium_points}, $\myvarfrak{f}(\myvar{0}) = p\gamma(V(\myvar{0})) \myvarfrak{g} L_{\myvarfrak{g}}V^\top = \myvar{0}$.
\end{proof}

\subsection{Choice of QP parameter}

In this subsection, we discuss the choice of different $p$'s in \eqref{eq:original_QP} and its impact on the closed-loop equilibrium points, with an intention to remove or confine undesired equilibrium points. 

{
\noindent \textit{1) Interior equilibrium points:}

We will start our discussion for equilibrium points in $\textup{Int}(\myset{C})$.} From Theorem \ref{thm:equilibrium_points}, all the equilibrium points in  $\text{Int}(\myset{C})$ are in  $\myset{E}^{clf}_{\overline{cbf}} $, where the following holds
\begin{equation} \label{eq:p_V_equation}
\myvarfrak{f} = p\gamma(V)\myvarfrak{g}L_{\myvarfrak{g}} V^\top.
\end{equation}
Note that for a given system in \eqref{eq:nonlinear_dyn}, a given CLF $V(\myvar{x})$ and a given class $\mathcal{K}$ function $\gamma(\cdot)$, $\myvarfrak{f}, \myvarfrak{g}, \gamma(V)$ and $ L_{\myvarfrak{g}} V$  are functions of the state $\myvar{x}$. We propose the following propositions on  choosing $p$.

\vspace{2mm}
\begin{proposition}
 \label{prop:eliminating_eqs}
If there exists a positive constant $p$ such that  no point in the set $ \Omega^{clf}_{\overline{cbf}} \cap \textup{Int}(\myset{C})$ except the origin satisfies \eqref{eq:p_V_equation}, then, with such a  $p$ in \eqref{eq:original_QP} applied, no equilibrium points except the origin exist in $\textup{Int}(\myset{C})$.
\end{proposition}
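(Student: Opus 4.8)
The plan is to read the conclusion off directly from Theorem \ref{thm:equilibrium_points}, so the argument is essentially bookkeeping rather than a fresh computation. First I would fix the positive constant $p$ whose existence is postulated in the hypothesis and instantiate Theorem \ref{thm:equilibrium_points} for exactly this $p$. This is the one point that needs care, since both the domain set $\Omega^{clf}_{\overline{cbf}}$ in \eqref{eq:domain_clfon_cbfin} and the equilibrium condition \eqref{eq:p_V_equation} carry a hidden dependence on $p$ through the $1/p$ term, so the hypothesis, the domain, and the condition must all be read with the same $p$ that is subsequently plugged into \eqref{eq:original_QP}. With this $p$ fixed, the closed-loop vector field $\myvarfrak{f}_{cl}$ is determined, and Theorem \ref{thm:equilibrium_points} asserts that every equilibrium point lying in $\textup{Int}(\myset{C})$ must belong to $\myset{E}^{clf}_{\overline{cbf}}$, because the other two components $\myset{E}^{clf}_{cbf,1}$ and $\myset{E}^{clf}_{cbf,2}$ are contained in $\partial \myset{C}$ by \eqref{eq:set_e_clfon_cbfon1}--\eqref{eq:set_e_clfon_cbfon2}.

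Next I would unfold the definition \eqref{eq:set_e_clfon_cbfin} of $\myset{E}^{clf}_{\overline{cbf}}$: it consists precisely of those states $\myvar{x} \in \Omega^{clf}_{\overline{cbf}} \cap \textup{Int}(\myset{C})$ for which the equality $\myvarfrak{f} = p\gamma(V)\myvarfrak{g}L_{\myvarfrak{g}}V^\top$, i.e. \eqref{eq:p_V_equation}, holds. The hypothesis of the proposition states that, for the chosen $p$, the origin is the only point of $\Omega^{clf}_{\overline{cbf}} \cap \textup{Int}(\myset{C})$ satisfying \eqref{eq:p_V_equation}. Combining these two observations yields $\myset{E}^{clf}_{\overline{cbf}} \subseteq \{\myvar{0}\}$, so that no interior equilibrium other than the origin can exist, which is exactly the assertion.

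The main obstacle here is conceptual rather than technical: there is no new inequality or limit to establish, only the consistent use of a single $p$ across the hypothesis and Theorem \ref{thm:equilibrium_points}, and the correct identification of which equilibrium component lies in $\textup{Int}(\myset{C})$. I would also note that the statement does not claim the origin is itself an equilibrium; it only bounds the set of interior equilibria from above by $\{\myvar{0}\}$. Hence no extra verification (such as checking $\myvarfrak{f}(\myvar{0}) = \myvar{0}$, which is the separate content of Corollary \ref{col:origin}) is needed to complete this proof.
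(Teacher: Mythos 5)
Your proof is correct and takes essentially the same route as the paper, which omits the argument as ``evident in view of Theorem \ref{thm:equilibrium_points} and Corollary \ref{col:origin}'': you instantiate Theorem \ref{thm:equilibrium_points} at the hypothesized $p$, note that interior equilibria are exactly $\myset{E}^{clf}_{\overline{cbf}}$ (the other two components lie on $\partial\myset{C}$), and conclude $\myset{E}^{clf}_{\overline{cbf}} \subseteq \{\myvar{0}\}$ from the hypothesis. Your added observation that Corollary \ref{col:origin} is dispensable—because the conclusion only bounds the set of interior equilibria from above and does not assert that the origin is an equilibrium, consistent with the paper not assuming $\myvarfrak{f}(\myvar{0}) = \myvar{0}$—is a correct refinement of the paper's citation.
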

The proof is evident in view of Theorem \ref{thm:equilibrium_points} and Corollary \ref{col:origin} and thus omitted here. Two numerical examples are given below.

\begin{figure*}[h]
	\centering
	\begin{subfigure}[t]{0.20\linewidth}
		\includegraphics[width=\linewidth]{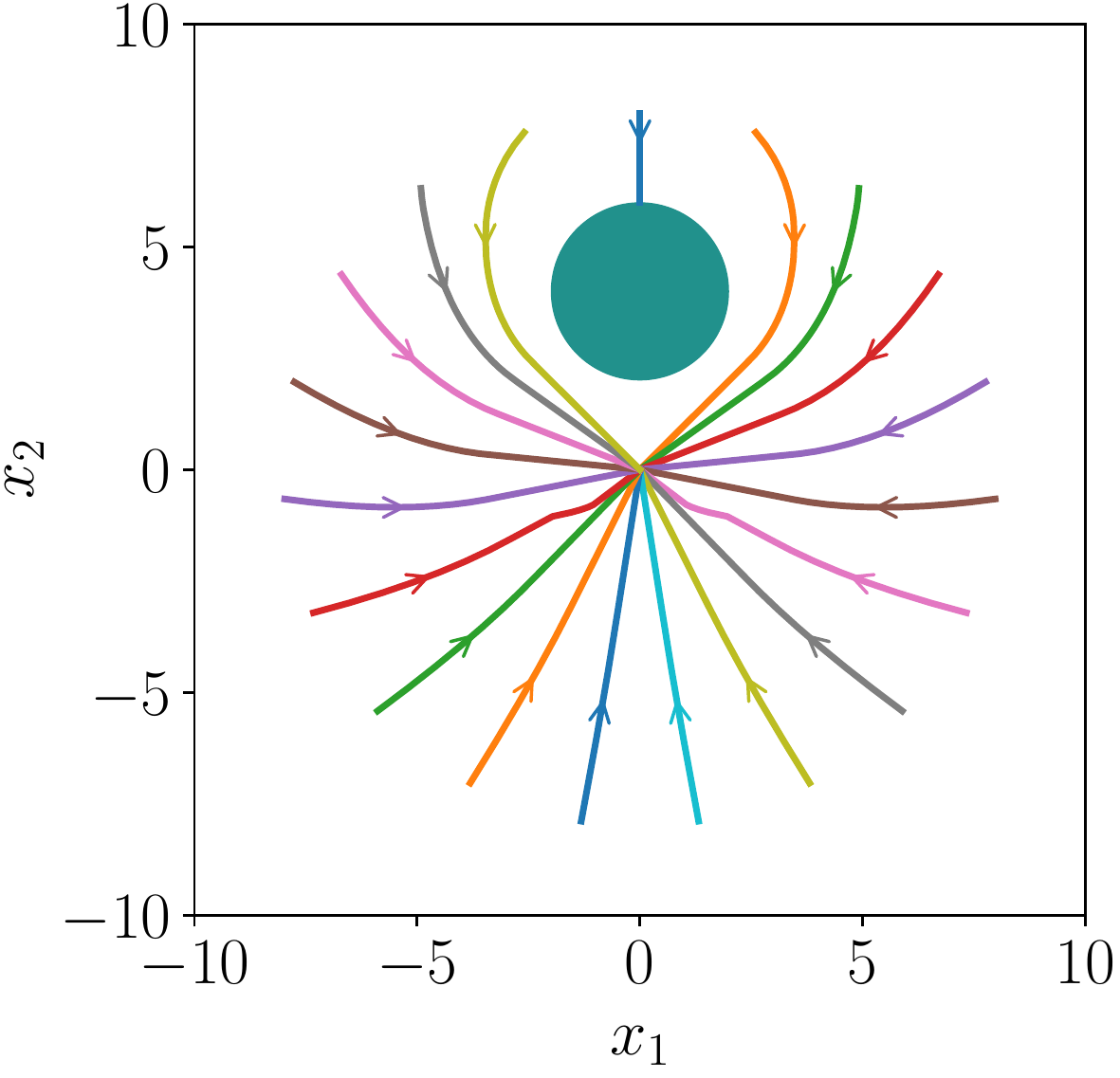}
		\caption{   $p = 1$. }   
	\end{subfigure} \hspace{5mm}
	\begin{subfigure}[t]{0.20\linewidth}
		\centering\includegraphics[width=\linewidth]{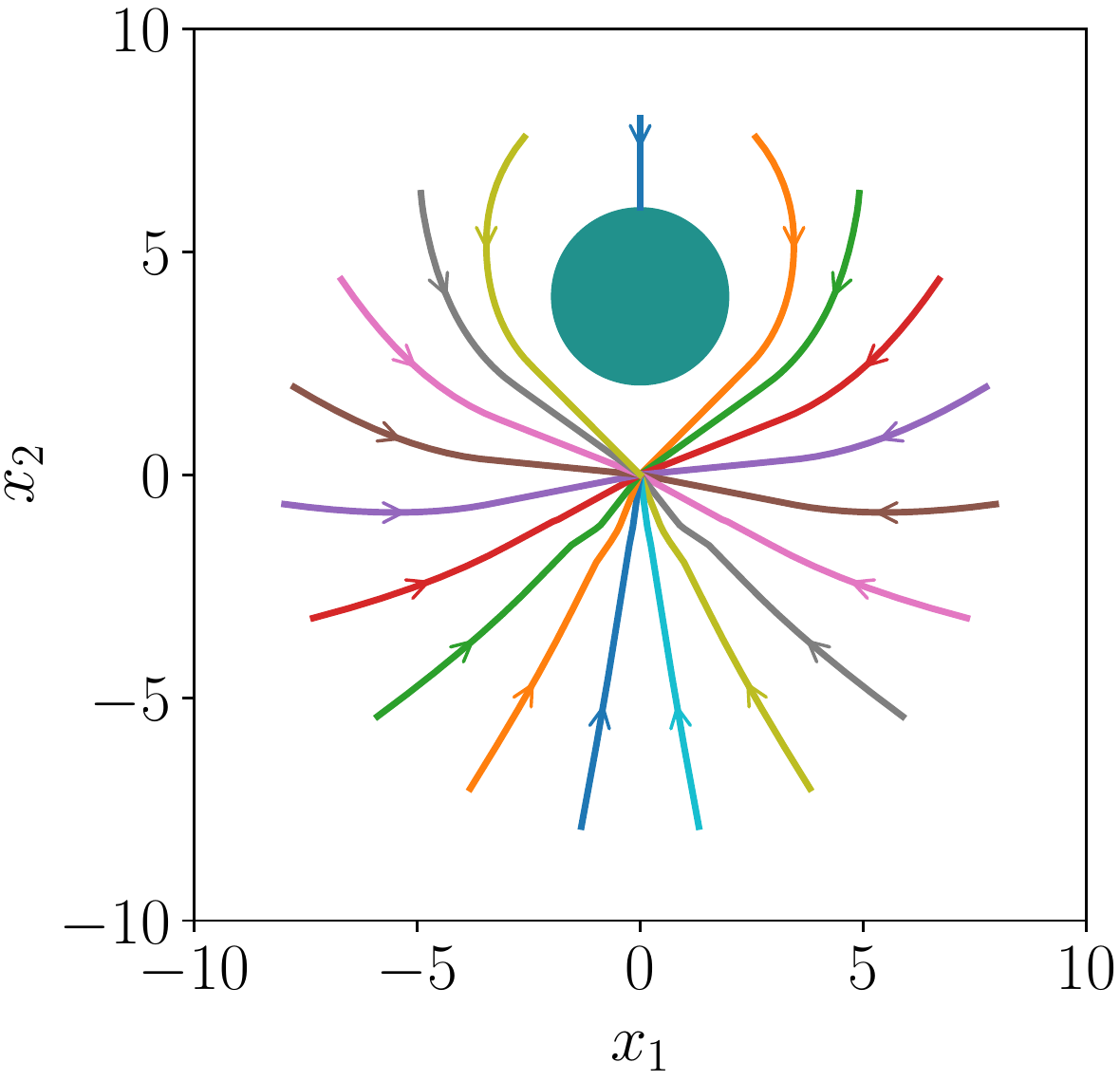}
		\caption{  $p = 10$.}
	\end{subfigure} \hspace{5mm}
	\begin{subfigure}[t]{0.20\linewidth}
		\centering\includegraphics[width=\linewidth]{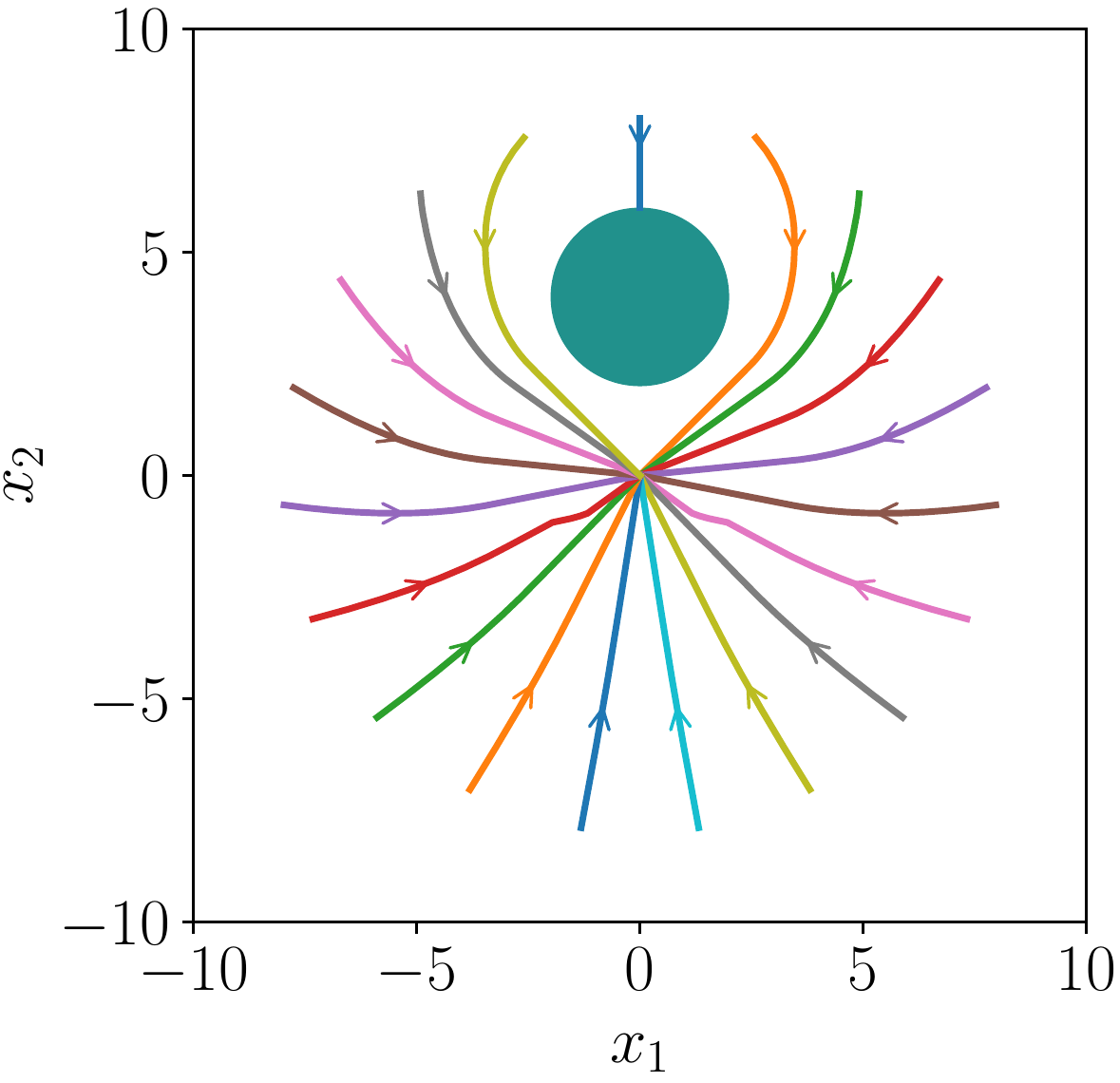}
		\caption{ $p = 100$.}
	\end{subfigure}
	\caption{ Comparison of the system trajectories in Example \ref{ex:-x} with varying $p$ values. The obstacle region is in dark green. All the simulated system trajectories converge to the origin, except one which converges to an equilibrium point on the boundary of the safety set. }  
	\label{fig:simulated_example1_trajectory}	
	\end{figure*}
	
	\begin{figure*}[h]
	\centering
	\begin{subfigure}[t]{0.20\linewidth}
		\includegraphics[width=\linewidth]{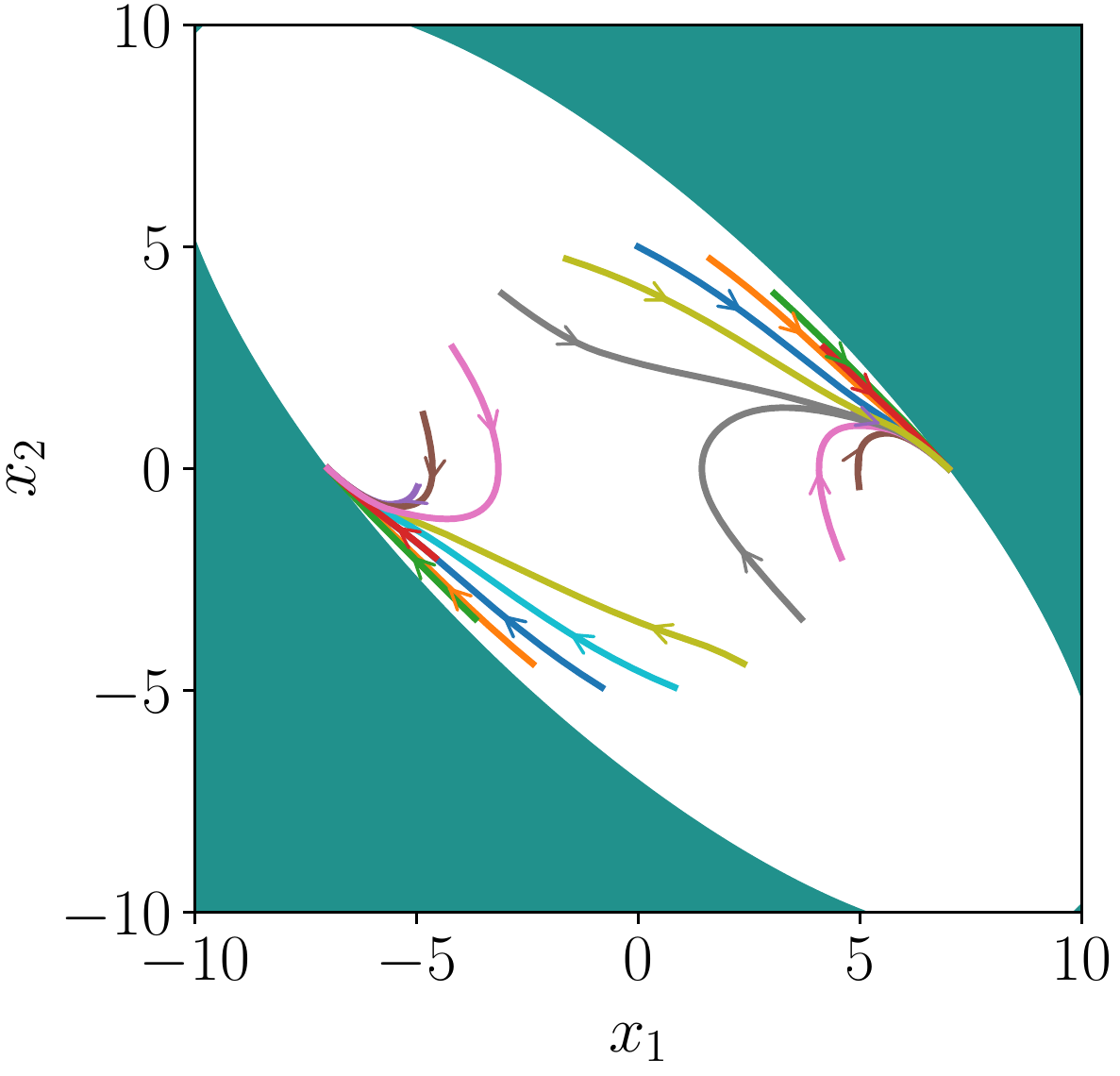}
		\caption{   $p = 0.1$. }   
	\end{subfigure} \hspace{5mm}
	\begin{subfigure}[t]{0.20\linewidth}
		\centering\includegraphics[width=\linewidth]{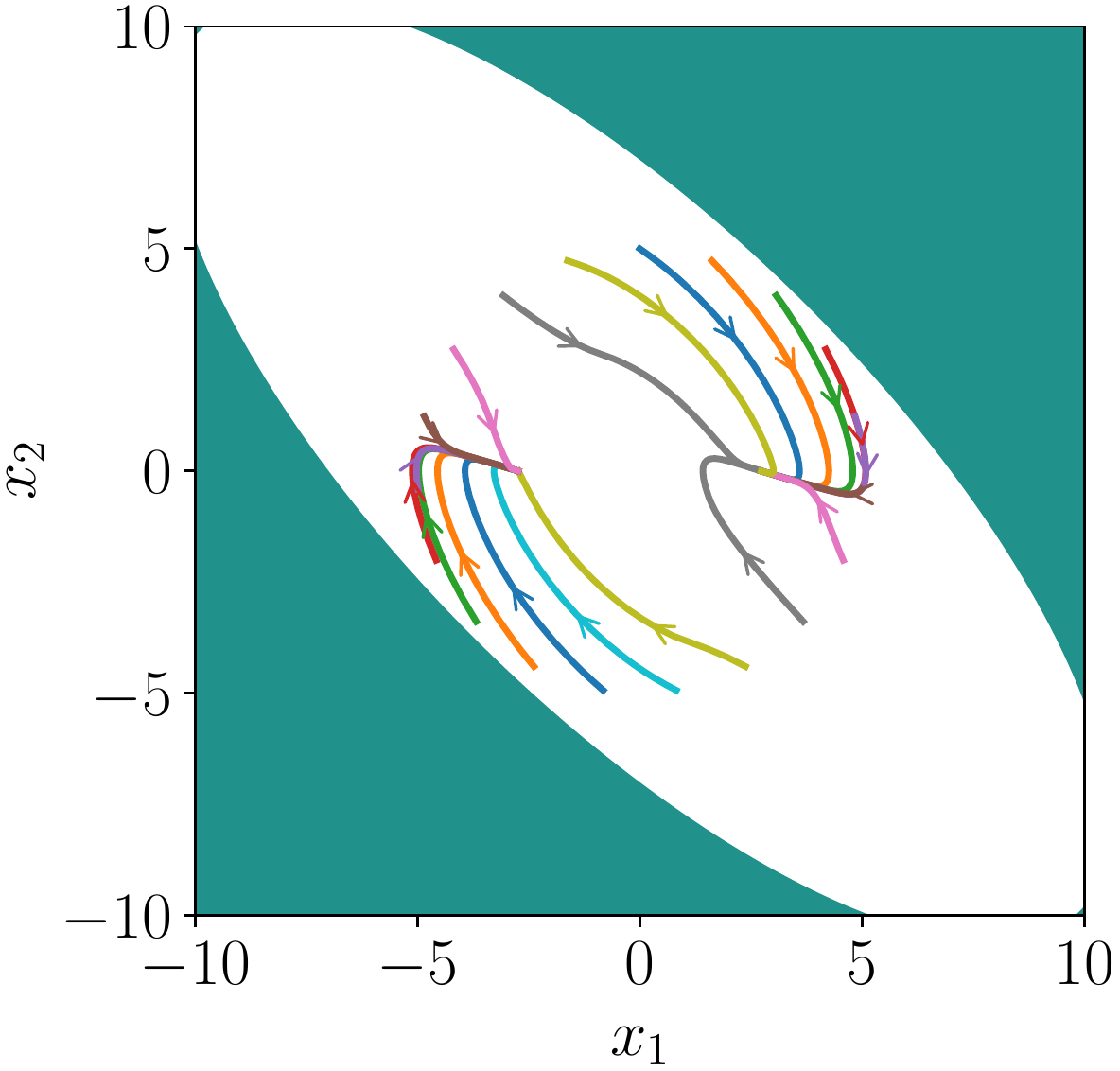}
		\caption{  $p = 1$.}
	\end{subfigure} \hspace{5mm}
	\begin{subfigure}[t]{0.20\linewidth}
		\centering\includegraphics[width=\linewidth]{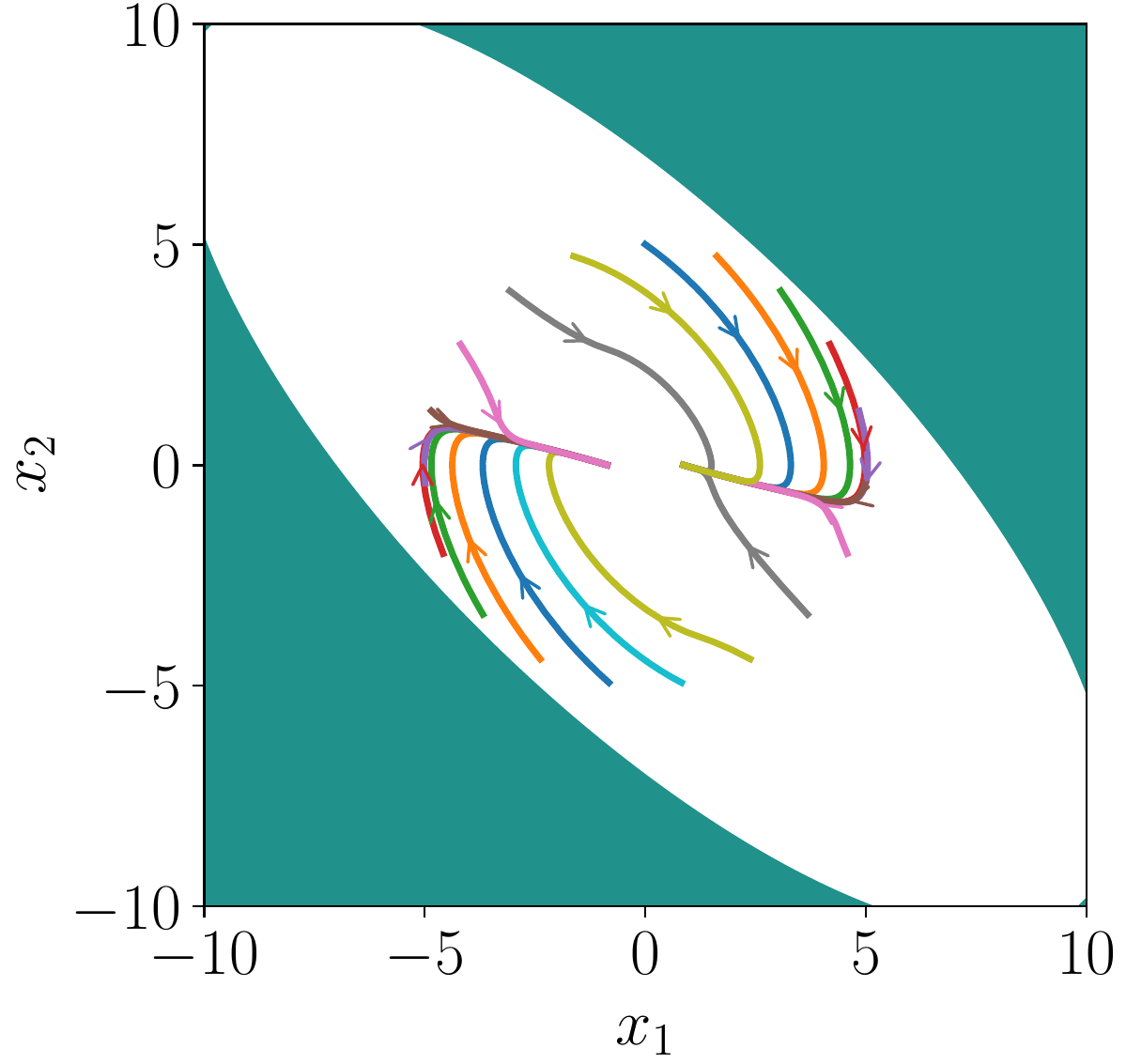}
		\caption{ $p = 10$.}
	\end{subfigure}
	\setlength{\belowcaptionskip}{-18pt}
	\caption{  Comparison of the system trajectories in Example \ref{ex:high-order} with varying $p$ values. The obstacle region is in dark green. When $p = 0.1$, the system trajectories converge to two undesired equilibrium points on the boundary of the safety set. When $p = 1$ or $10$, the system trajectories instead converge to equilibrium points in the interior of the safety set.}
	\label{fig:simulated_example2_trajectory}
	\end{figure*}
	
\vspace{2mm}
\begin{example} \label{ex:-x}
 Consider the following system 
\begin{equation} \label{eq:single_integrator_-x}
    \dot{\myvar{x}} = -\myvar{x}+ \myvar{u},
\end{equation}
 with the system state $\myvar{x} = (x_1, \ x_2)^\top$, a given CLF $V(\myvar{x}) = \frac{\myvar{x}^\top \myvar{x}}{2}$ and $\gamma(x) = x, \forall x\in \mathbb{R}_{\ge 0}$.  To show that $V(\myvar{x}) = \frac{\myvar{x}^\top \myvar{x}}{2}$ is indeed a CLF, we choose $u(\myvar{x}) =  \myvar{0}$. The time derivative of $V(\myvar{x})$
 $ \dot{V} = -x_1^2 -x_2^2 \leq - \gamma(V(\myvar{x}))$ satisfies the CLF condition in Definition \ref{def:clf}.  From \eqref{eq:p_V_equation}, by left multiplying $\nabla V^\top$ on both sides, one obtains $   L_{\myvarfrak{f}} V = p \gamma(V) L_{\myvarfrak{g}} V L_{\myvarfrak{g}} V^\top  $. Substituting $L_{\myvarfrak{f}} V = - x_1^2 -x_2^2$, $L_{\myvarfrak{g}} V = (x_1, x_2)$, $\gamma(V) = (x_1^2 + x_2^2)/2$, we obtain $-2(x_1^2 + x_2^2) =  p(x_1^2 + x_2^2)^2 $.
 Let $p $ be any positive scalar. Then this equality does not hold for any $\myvar{x} \in \mathbb{R}^2$ except the origin. Thus,  no equilibrium points except the origin exist in the interior of the set $\myset{C}$, no matter what CBF $h(\myvar{x})$ is chosen.  In  Fig. \ref{fig:simulated_example1_trajectory}, the obstacle region (in dark green) is $\{ \myvar{x} \in \mathbb{R}^2: \| \myvar{x} - (0,4)\| \leq 2  \}$ and the CBF is given by $h(\myvar{x}) =\| \myvar{x} - (0,4)\|^2 - 4$ and $\alpha(x) = x, \forall x \in \mathbb{R}$. We observe that all the simulated trajectories converge to the origin, except one that converges to an equilibrium point on the boundary of the safety set.
\end{example}

\vspace{2mm}
\begin{example} \label{ex:high-order}
 Consider the following system 
\begin{equation} \label{eq:high_order_example}
    \dot{\myvar{x}} = -\begin{psmallmatrix}
    0 & 1 \\
    1  & 0
    \end{psmallmatrix} \myvar{x}+ \begin{psmallmatrix}
    0 \\
    1  
    \end{psmallmatrix} u,
\end{equation}
 with the system state $\myvar{x} = (x_1, \ x_2)^\top$, a given CLF $V(\myvar{x}) = \frac{1}{2} x^2_1 + \frac{1}{2}(x_2 + \frac{1}{2}x_1)^2$, $\gamma(x) = \frac{3}{7} x, \forall x\in \mathbb{R}_{\ge 0}$, a given CBF $h(\myvar{x}) = -0.1x_1^2 - 0.15x_1x_2 - 0.1x_2^2+4.9$ and $\alpha(x) = x, \forall x\in \mathbb{R}$. To show that  $V(\myvar{x})$ is indeed a CLF, let $u(\myvar{x}) =  - 2x_1 - x_2$. Noticing that $\pm 2x_1 x_2 \leq x_1^2 + x_2^2$, one verifies that $V(\myvar{x}) = \frac{5}{8}x_1^2 + \frac{1}{2}x_1x_2+ \frac{1}{2}x_2^2 \leq \frac{7}{8}x_1^2 + \frac{7}{8}x_2^2$ and 
 $ \dot{V} = - \frac{1}{2}x_1^2 -\frac{1}{4}x_1 x_2 -\frac{1}{2}x_2^2 \leq - \frac{3}{8}x_1^2  - \frac{3}{8}x_2^2 \leq  - \gamma(V(\myvar{x}))$
 satisfies the CLF condition in Definition \ref{def:clf}. To show $h(\myvar{x})$ is a CBF, we only need to examine whether or not $L_{\myvarfrak{f}}h(\myvar{x}) + \alpha(h(\myvar{x})) \ge 0$ when $L_{\myvarfrak{g}}h(\myvar{x}) = -0.15x_1 - 0.2x_2 = 0$ (otherwise, with a non-zero coefficient, we can always find a $u$ that satisfies the CBF condition in Definition \ref{def:cbf}). Substituting $x_1 =(-2/1.5) x_2$ into $L_{\myvarfrak{f}}h(\myvar{x}) + \alpha(h(\myvar{x})) = -0.25x_1^2 -0.55x_1x_2 -0.25x_2^2+4.9$, one verifies that, for $\myvar{x}$ with $L_{\myvarfrak{g}}h(\myvar{x}) = 0$, $   L_{\myvarfrak{f}}h(\myvar{x}) + \alpha(h(\myvar{x})) = 0.0389x_2^2 + 4.9\ge 0.$

Suppose that there exists an equilibrium point $\myvar{x} = (x_1,x_2)\in \textup{Int}(\myset{C})$. From  \eqref{eq:p_V_equation}, $     \begin{psmallmatrix}
      x_2\\
      x_1
    \end{psmallmatrix} = p \frac{3}{7}V(\myvar{x}) (\frac{1}{2}x_1 + x_2) \begin{psmallmatrix}
    0\\
    1
    \end{psmallmatrix}.$
From the first row, we obtain $x_2 = 0$. Substituting $x_2 = 0$ into the second row, we have $x_1 = \frac{15p}{112}x_1^3$. Thus, $x_1 = 0, \pm \sqrt{112/15p}, p>0$. Proposition \ref{prop:eliminating_eqs} dictates $\myvar{x} = (x_1,x_2) \in \textup{Int}(\myset{C})$, and recall that $\myset{C}$ is the superlevel set of the CBF $h(\myvar{x})$. Thus, we conclude that for $ 0 <p< 16/105 \approx 0.152 $, there exists only one equilibrium point (the origin) in $\textup{Int}(\myset{C})$, and for $p>16/105$, there exist three equilibrium points in $\textup{Int}(\myset{C})$. This conclusion is verified by the simulation results in Fig. \ref{fig:simulated_example2_trajectory}.

Example 2 is of interest because: 1) here neither $\myvarfrak{g}$ is full rank nor $L_{\myvarfrak{g}}h \neq 0, \forall \myvar{x}\in \mathbb{R}^n$, which is required in previous works; 2) it demonstrates that, under the QP formulation in \eqref{eq:original_QP}, the existence of undesired equilibria in the interior of the safety set depends on the value of $p$.

 \end{example}

 Determining a $p$ that satisfies the assumptions in Proposition \ref{prop:eliminating_eqs} could be difficult for general nonlinear systems. One systematic way to comply with these assumptions is given in Section \ref{sec:new_qp_formulation} with a new quadratic program formulation.  Alternatively, we could tune $p$ to adjust the positions of equilibrium points in the interior of the set $\myset{C}$ as given in the following proposition.

 \vspace{2mm}
  \begin{proposition}
 \label{prop:confining_eqs}
  \bluetext{Assume that there exists a class $\mathcal{K}_{\infty}$ function $\gamma_1$ such that $\gamma_1( \| \myvar{x} \|) \leq \gamma(V(\myvar{x})) $. Let $ \bar{v}:=   \sup_{\myvar{x} \in \mathbb{R}^{n} \setminus \{ \myvar{0}\} } \frac{L_{\myvarfrak{f}}V}{L_{\myvarfrak{g}}V L_{\myvarfrak{g}}V^\top} \in [-\infty,\infty]$. If $\bar{v}$ is finite,} then all the possible equilibrium points $\myvar{x}_{eq} $ in the interior of the set $\myset{C}$ are bounded by $\| \myvar{x}_{eq} \| \leq \gamma_1^{-1}(p^{-1} \bar{v}).$

   \end{proposition}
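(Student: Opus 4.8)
The plan is to reduce the vector identity characterizing interior equilibria to a scalar inequality in $\bar{v}$ and then invert $\gamma_1$. By Theorem~\ref{thm:equilibrium_points}, every equilibrium point in $\textup{Int}(\myset{C})$ lies in $\myset{E}^{clf}_{\overline{cbf}}$ and therefore satisfies \eqref{eq:p_V_equation}, i.e. $\myvarfrak{f} = p\gamma(V)\myvarfrak{g} L_{\myvarfrak{g}}V^\top$. First I would left-multiply this identity by $\nabla V^\top$, exactly as in Example~\ref{ex:-x}, using $\nabla V^\top \myvarfrak{f} = L_{\myvarfrak{f}}V$ and $\nabla V^\top \myvarfrak{g} = L_{\myvarfrak{g}}V$, to convert the vector relation into the scalar relation
\begin{equation*}
L_{\myvarfrak{f}}V = p\,\gamma(V)\,L_{\myvarfrak{g}}V\,L_{\myvarfrak{g}}V^\top .
\end{equation*}

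Next, at an equilibrium $\myvar{x}_{eq}\neq\myvar{0}$ with $L_{\myvarfrak{g}}V(\myvar{x}_{eq})\neq\myvar{0}$, I would divide by the strictly positive scalar $L_{\myvarfrak{g}}V\,L_{\myvarfrak{g}}V^\top$ to get $p\,\gamma(V(\myvar{x}_{eq})) = L_{\myvarfrak{f}}V/(L_{\myvarfrak{g}}V\,L_{\myvarfrak{g}}V^\top)$. By the definition of the supremum, the right-hand side is at most $\bar{v}$, so $\gamma(V(\myvar{x}_{eq}))\le p^{-1}\bar{v}$. Combining this with the standing lower bound $\gamma_1(\|\myvar{x}\|)\le\gamma(V(\myvar{x}))$ gives $\gamma_1(\|\myvar{x}_{eq}\|)\le p^{-1}\bar{v}$. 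Since $\gamma_1$ is a class $\mathcal{K}_\infty$ function, it is a strictly increasing bijection of $[0,\infty)$ with a well-defined monotone inverse, so applying $\gamma_1^{-1}$ to both sides yields the claimed bound $\|\myvar{x}_{eq}\|\le\gamma_1^{-1}(p^{-1}\bar{v})$. Note that this argument also forces $p^{-1}\bar{v}\ge 0$ at any such equilibrium, consistent with $\gamma_1^{-1}$ being defined on $[0,\infty)$: if $\bar{v}<0$ the inequality $\gamma(V)\le p^{-1}\bar{v}<0$ is impossible, so no such interior equilibrium exists, in agreement with Proposition~\ref{prop:eliminating_eqs}.

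The step I expect to be the main obstacle is the degenerate case $L_{\myvarfrak{g}}V(\myvar{x}_{eq})=\myvar{0}$, where the division is not allowed and the ratio defining $\bar{v}$ is of the form $0/0$; indeed, $\myvar{u}^\star=\myvar{0}$ there and the equilibrium condition forces $\myvarfrak{f}(\myvar{x}_{eq})=\myvar{0}$, hence $L_{\myvarfrak{f}}V(\myvar{x}_{eq})=0$ as well. Here the finiteness of $\bar{v}$ must carry the weight of the argument: since the supremum ranges over all of $\mathbb{R}^n\setminus\{\myvar{0}\}$, any nonzero state with $L_{\myvarfrak{g}}V=\myvar{0}$ but $L_{\myvarfrak{f}}V>0$ would force $\bar{v}=+\infty$, so the hypothesis $\bar{v}<\infty$ constrains the behavior of $V$ precisely on the set where $L_{\myvarfrak{g}}V$ vanishes. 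Establishing cleanly that $\bar{v}<\infty$ either excludes such equilibria or still delivers the stated bound—rather than the routine generic computation above—is the genuinely delicate point, and I would treat it either by a careful limiting argument along states $\myvar{x}_k\to\myvar{x}_{eq}$ with $L_{\myvarfrak{g}}V(\myvar{x}_k)\neq\myvar{0}$ together with continuity of $\gamma\circ V$, or by restricting $\bar{v}$ to the regime $L_{\myvarfrak{g}}V\neq\myvar{0}$ and reading the proposition as a statement about equilibria therein.
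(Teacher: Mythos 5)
Your main computation is exactly the paper's: left-multiply \eqref{eq:p_V_equation} by $\nabla V^\top$ to get $L_{\myvarfrak{f}}V = p\,\gamma(V)\,L_{\myvarfrak{g}}V L_{\myvarfrak{g}}V^\top$, divide by $L_{\myvarfrak{g}}V L_{\myvarfrak{g}}V^\top$, bound the resulting ratio by $\bar{v}$, and invert $\gamma_1$. The genuine gap is the degenerate case $L_{\myvarfrak{g}}V(\myvar{x}_{eq})=\myvar{0}$, which you correctly flag as the delicate point but leave unresolved, and neither of your proposed escapes works. The limiting argument fails because the identity $p\,\gamma(V) = L_{\myvarfrak{f}}V/(L_{\myvarfrak{g}}V L_{\myvarfrak{g}}V^\top)$ is an \emph{equilibrium} identity: it holds only at points satisfying \eqref{eq:p_V_equation}, so evaluating the ratio along a sequence $\myvar{x}_k \to \myvar{x}_{eq}$ of non-equilibrium points with $L_{\myvarfrak{g}}V(\myvar{x}_k)\neq\myvar{0}$ tells you nothing about $\gamma(V(\myvar{x}_{eq}))$. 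Restricting attention to the regime $L_{\myvarfrak{g}}V\neq\myvar{0}$ changes the statement rather than proving it. And finiteness of $\bar{v}$ cannot carry the weight you assign to it: as you yourself observe, at such a point $L_{\myvarfrak{f}}V(\myvar{x}_{eq})=0$ as well, so the ratio there is of the form $0/0$ and the hypothesis $\bar{v}<\infty$ is silent about it.

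The missing idea is that the degenerate case is excluded not by $\bar{v}$ but by the CLF property itself, which is how the paper handles it. Suppose $\myvar{x}_{eq}\neq\myvar{0}$ is an interior equilibrium with $L_{\myvarfrak{g}}V(\myvar{x}_{eq})=\myvar{0}$. Left-multiplying \eqref{eq:p_V_equation} by $\nabla V^\top$ gives $L_{\myvarfrak{f}}V(\myvar{x}_{eq}) = p\,\gamma(V(\myvar{x}_{eq}))\,L_{\myvarfrak{g}}V L_{\myvarfrak{g}}V^\top = 0$. But then the left-hand side of the CLF inequality \eqref{eq:clf} at $\myvar{x}_{eq}$ equals $0$ irrespective of $\myvar{u}$, while the right-hand side equals $-\gamma(V(\myvar{x}_{eq}))<0$ because $V$ is positive definite and $\myvar{x}_{eq}\neq\myvar{0}$; this contradicts $V$ being a CLF. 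Hence every non-origin interior equilibrium automatically satisfies $L_{\myvarfrak{g}}V(\myvar{x}_{eq})\neq\myvar{0}$, and your generic computation then completes the proof with no extra hypotheses and no limiting argument.
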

 \begin{proof}
       \bluetext{We first show by contradiction that  at any non-origin interior equilibrium point $\myvar{x}_{eq}$, $ L_{\myvarfrak{g}}V(\myvar{x}_{eq})\neq \myvar{0}$. Suppose otherwise, then from \eqref{eq:p_V_equation}, we have $L_{\myvarfrak{f}} V (\myvar{x}_{eq}) = 0$. This however leads to a contradiction considering that $V(\myvar{x})$ is a control Lyapunov function (in \eqref{eq:clf}, the left-hand side is $0$ irrespective of $\myvar{u}$ while the right-hand side is negative). Thus, we know that for all non-origin interior equilibrium points,  $L_{\myvarfrak{g}}V(\myvar{x}_{eq})\neq \myvar{0}$, and from \eqref{eq:p_V_equation}, $\gamma(V(\myvar{x}_{eq})) = p^{-1} \frac{L_{\myvarfrak{f}}V}{L_{\myvarfrak{g}}V L_{\myvarfrak{g}}V^\top}.$ }     
Note that $\bar{v} = \sup_{\myvar{x} \in \mathbb{R}^{n} \setminus \{ \myvar{0}\} } \frac{L_{\myvarfrak{f}}V}{L_{\myvarfrak{g}}V L_{\myvarfrak{g}}V^\top} $ \bluetext{is finite by assumption}. Thus, all the possible equilibrium points in the interior of the set $\myset{C} $ are bounded by $\| \myvar{x}_{eq} \| \leq \gamma_1^{-1}(p^{-1} \bar{v}).$
 \end{proof}
 
 Proposition \ref{prop:confining_eqs} implies that we can confine the equilibrium points in the interior of the set $\myset{C}$ arbitrarily close to the origin by choosing a greater $p$. A numerical example is given below.

 \vspace{2mm}
\begin{example} \label{ex:x}
 Consider the following system 
\begin{equation} \label{eq:single_integrator_x}
    \dot{\myvar{x}} = \myvar{x}+ \myvar{u},
\end{equation}
 with the system state $\myvar{x} = (x_1, \ x_2)^\top$, a given CLF $V(\myvar{x}) = \frac{\myvar{x}^\top \myvar{x}}{2}$ and $\gamma(x) = x, \forall x\in \mathbb{R}_{\ge 0}$.
Choosing $\myvar{u}(\myvar{x}) =  (-2x_1,-2x_2)^\top$, we obtain the time derivative $ \dot{V} = -x_1^2  - x_2^2  \leq - \gamma(V(\myvar{x}))$
 satisfies the CLF condition in Definition \ref{def:clf}. 
 \bluetext{One could verify that} $\sup_{\myvar{x} \in \mathbb{R}^{2} \setminus \{ \myvar{0}\} } \frac{L_{\myvarfrak{f}}V}{L_{\myvarfrak{g}}V L_{\myvarfrak{g}}V^\top} = \sup_{\myvar{x} \in \mathbb{R}^{2} \setminus \{ \myvar{0}\} } 1 = 1 $. Thus, all possible equilibrium points $\myvar{x}_{eq}$ in the interior of the set $\myset{C}$ are bounded by $\| \myvar{x}_{eq}\| \leq \sqrt{2/p}$. In  Fig. \ref{fig:simulated_example3_trajectory}, the obstacle region (in dark green) is $\{ \myvar{x} \in \mathbb{R}^2: \| \myvar{x} - (0,4)\| \leq 2  \}$, and the CBF is given by $h(\myvar{x}) =\| \myvar{x} - (0,4)\|^2 - 4$ and $\alpha(x) = x, \forall x \in \mathbb{R}$. We observe that all of the simulated trajectories except one converge to the neighborhood region of the origin, the size of which depends on the parameter $p$. 

\end{example}

\bluetext{Similar analysis can be done for the scenario in Example 2, Fig. \ref{fig:simulated_example2_trajectory} b) and c). We omit the details for the sake of space.}

\begin{figure*}[h]
	\centering
	\begin{subfigure}[t]{0.20\linewidth}
		\includegraphics[width=\linewidth]{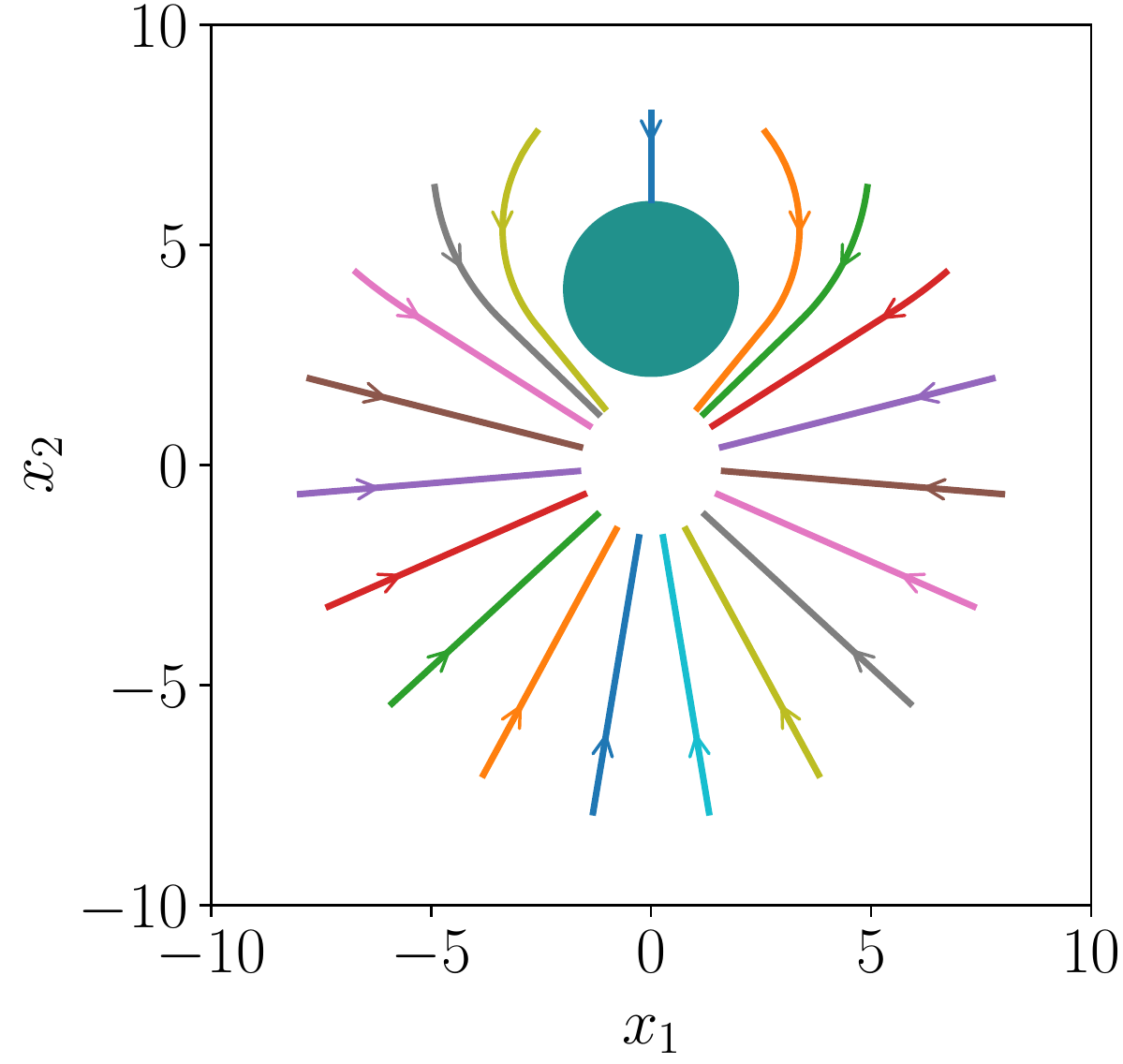}
		\caption{  $p = 1$. }   
	\end{subfigure} \hspace{5mm}
	\begin{subfigure}[t]{0.20\linewidth}
		\centering\includegraphics[width=\linewidth]{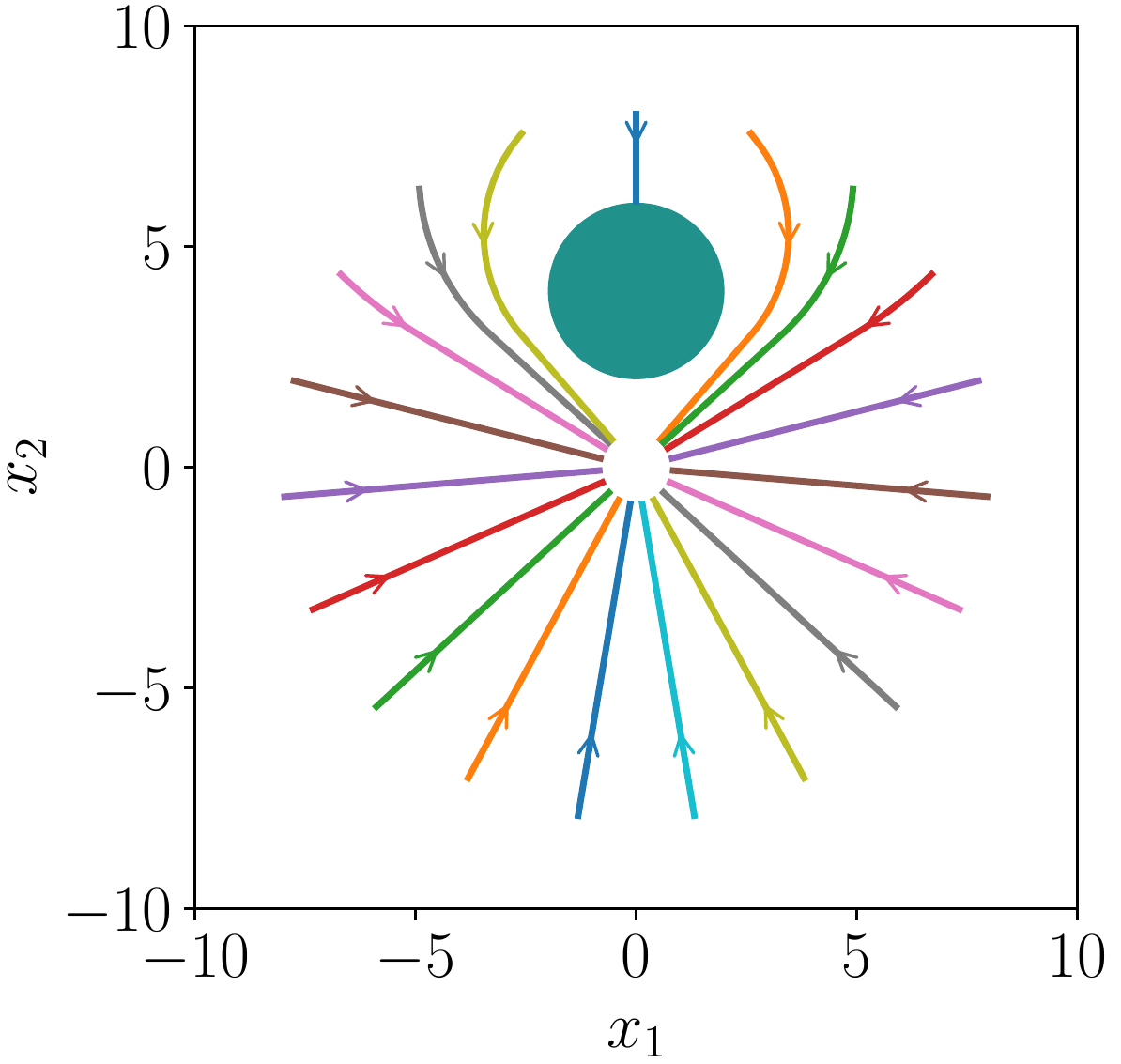}
		\caption{  $p = 10$.}
	\end{subfigure} \hspace{5mm}
	\begin{subfigure}[t]{0.20\linewidth}
		\centering\includegraphics[width=\linewidth]{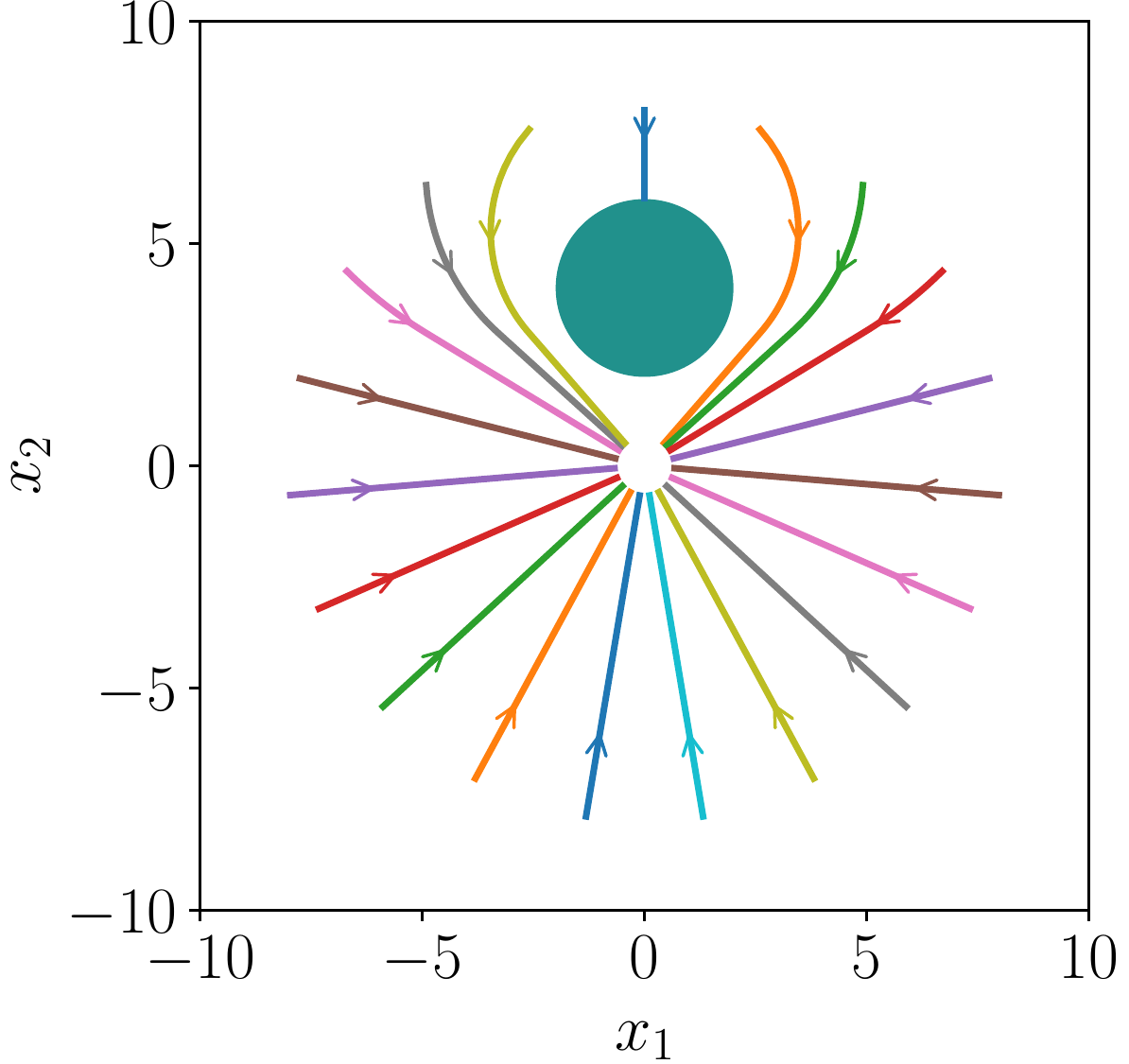}
		\caption{  $p = 100$.}
	\end{subfigure}
	\caption{ Comparison of the system trajectories in Example \ref{ex:x} with varying $p$ values.  The obstacle region is in dark green. All of the  simulated trajectories except one converge to a neighborhood region of the origin, which shrinks as $p$ becomes larger.}
	\label{fig:simulated_example3_trajectory}	
	\end{figure*}
	
	\begin{figure*}[h]
	\centering
	\begin{subfigure}[t]{0.20\linewidth}
		\includegraphics[width=\linewidth]{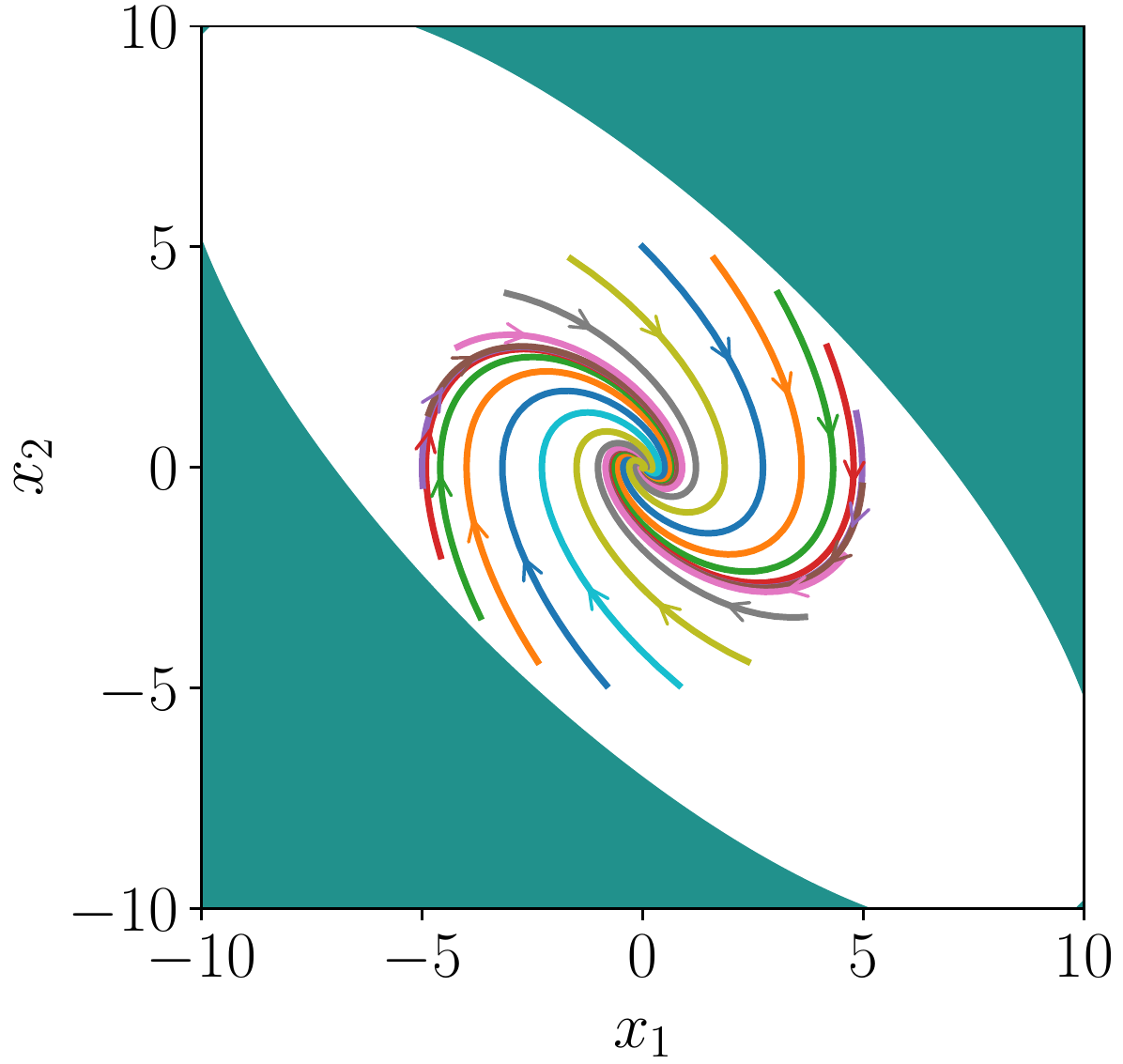}
		\caption{   $p = 0.1$. }   
	\end{subfigure} \hspace{5mm}
	\begin{subfigure}[t]{0.20\linewidth}
		\centering\includegraphics[width=\linewidth]{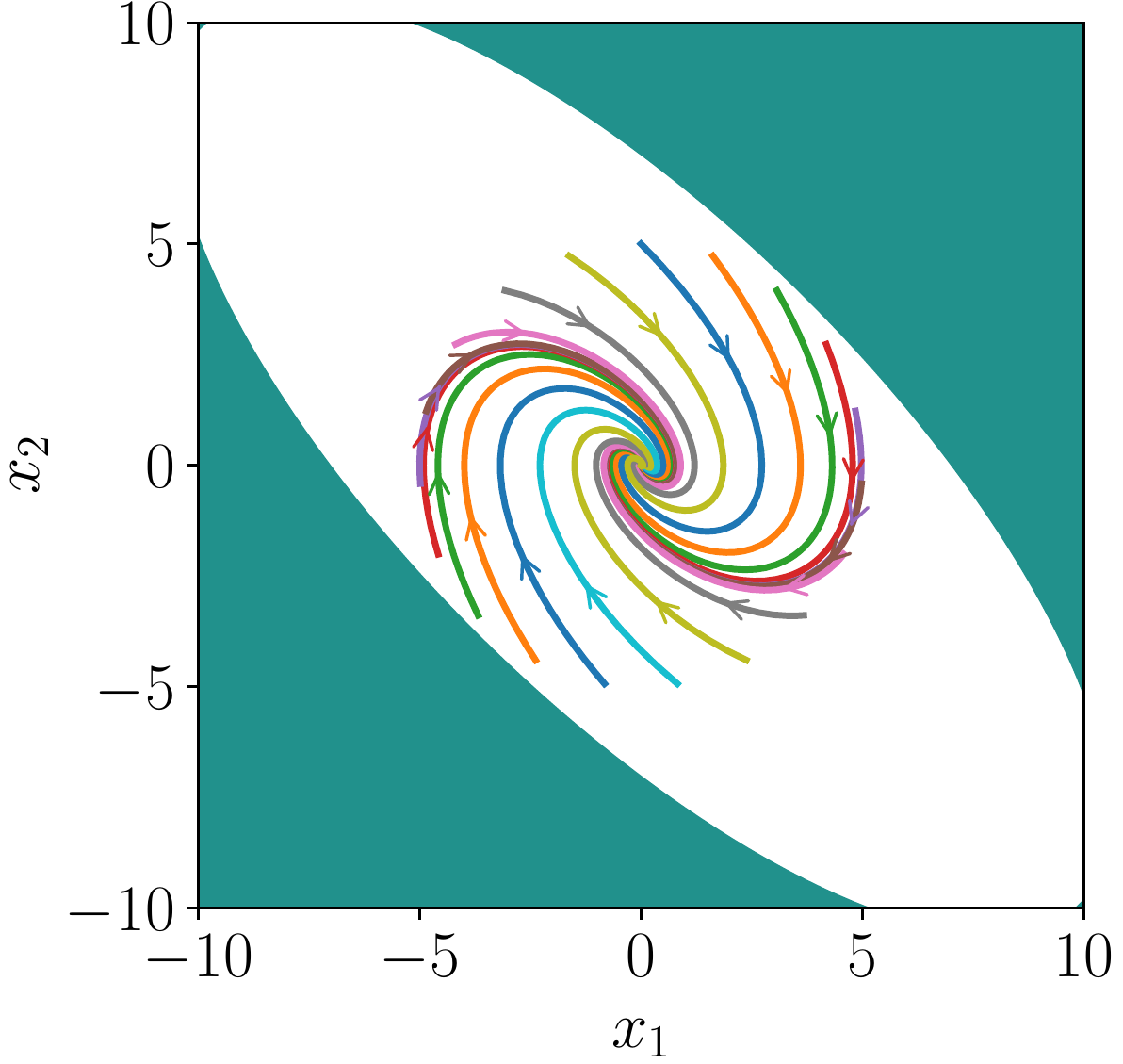}
		\caption{  $p = 1$.}
	\end{subfigure} \hspace{5mm}
	\begin{subfigure}[t]{0.20\linewidth}
		\centering\includegraphics[width=\linewidth]{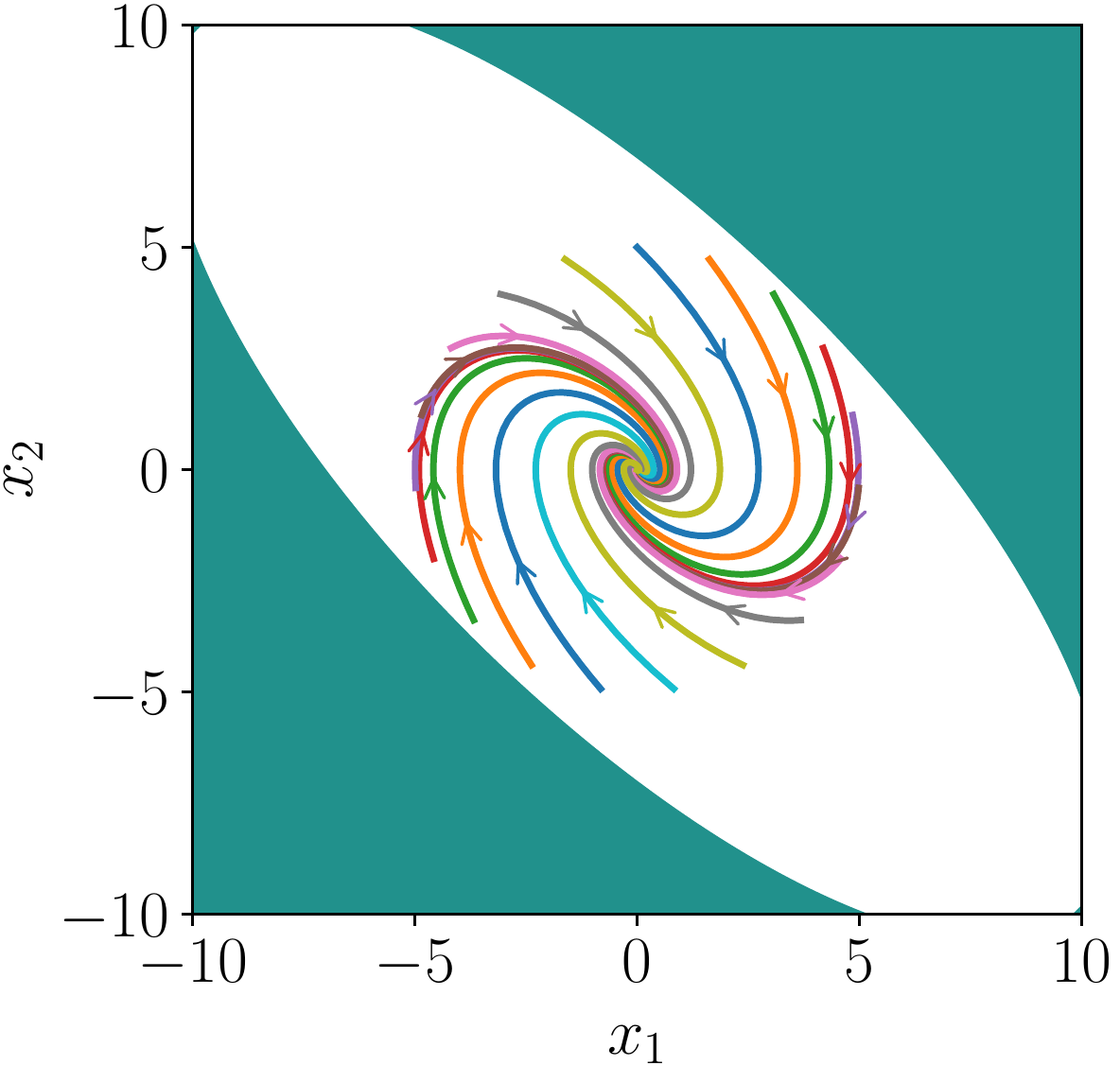}
		\caption{ $p = 10$.}
	\end{subfigure}
	\setlength{\belowcaptionskip}{-18pt}
	\caption{  Comparison of the system trajectories for the transformed system in Example \ref{ex:new_qp_high_order} with varying $p$ values. All the simulated system trajectories converge to the origin under the proposed quadratic program. }
	\label{fig:simulated_example4_trajectory}	
	\end{figure*}

{
\noindent \textit{2) Boundary equilibrium points:}}

Now consider the possible equilibrium points on $\partial \myset{C}$. For the equilibrium points in $\myset{E}^{clf}_{cbf,1}$, similar results as in Proposition \ref{prop:eliminating_eqs} and \ref{prop:confining_eqs} can be obtained as the control input shares the same form as in $\myset{E}^{clf}_{\overline{cbf}}$. For the equilibrium points in $\myset{E}^{clf}_{cbf,2}$, we show that for a particular scenario, different choices of $p$ do not affect the existence of the equilibrium points.
\vspace{2mm}
\begin{proposition} \label{prop:boundary_equilibrium}
Assume the following three conditions hold: i) $\myvar{x}_{eq} \in\myset{E}^{clf}_{cbf,2}$  for some $p>0$; ii) $\nabla V(\myvar{x}_{eq} ) = k \nabla h(\myvar{x}_{eq} )$ for some $k>0$; iii) $L_{\myvarfrak{f}}h (\myvar{x}_{eq} ) \leq 0$.
Then   $\myvar{x}_{eq}  \in\myset{E}^{clf}_{cbf,2}$ for any $ p >0$.
\end{proposition}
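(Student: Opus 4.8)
The plan is to exploit that $\myvar{x}_{eq}$ is a fixed point of $\mathbb{R}^n$, so every quantity evaluated there ($\myvarfrak{f},\myvarfrak{g},\nabla V,\nabla h$, and hence all the Lie derivatives $L_{\myvarfrak{f}}V, L_{\myvarfrak{g}}V, L_{\myvarfrak{f}}h, L_{\myvarfrak{g}}h$, together with $V$ and $h$) is independent of $p$; only the multipliers $\lambda_1,\lambda_2$, the determinant $\Delta$, and the control $\myvar{u}^\star$ carry $p$-dependence, through the single $1/p$ entry in the KKT matrix. Membership $\myvar{x}_{eq}\in\myset{E}^{clf}_{cbf,2}$ amounts to three things: (a) the equilibrium identity $\myvarfrak{f} = \lambda_1\myvarfrak{g}L_{\myvarfrak{g}}V^\top - \lambda_2\myvarfrak{g}L_{\myvarfrak{g}}h^\top$; (b) the domain inequalities $\lambda_1\ge0,\ \lambda_2\ge0$ (which, via $\Delta>0$, are exactly \eqref{eq:domain2_clfon_cbfon}); and (c) $L_{\myvarfrak{g}}h\neq\myvar{0}$. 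Since (c) is a $p$-free property of $\myvar{x}_{eq}$, it suffices to show that (a) and (b) continue to hold for every $p>0$ once they hold at the one value of $p$ granted by condition (i).

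The engine of the argument is condition (ii). From $\nabla V(\myvar{x}_{eq}) = k\nabla h(\myvar{x}_{eq})$ I would immediately read off the collinearity $L_{\myvarfrak{g}}V = kL_{\myvarfrak{g}}h$ and $L_{\myvarfrak{f}}V = kL_{\myvarfrak{f}}h$. Substituting these into $\Delta = \|L_{\myvarfrak{g}}h\|^2/p + \|L_{\myvarfrak{g}}V\|^2\|L_{\myvarfrak{g}}h\|^2 - (L_{\myvarfrak{g}}V L_{\myvarfrak{g}}h^\top)^2$ collapses the Cauchy--Schwarz gap to zero, leaving the clean value $\Delta = \|L_{\myvarfrak{g}}h\|^2/p > 0$. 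Next, $\myvar{x}_{eq}\in\partial\myset{C}$ gives $h(\myvar{x}_{eq})=0$, hence $F_h = L_{\myvarfrak{f}}h$ and $F_V - kF_h = \gamma(V)$. Feeding the collinearity into \eqref{eq:lam1_2_clfon_cbfon}--\eqref{eq:lam2_2_clfon_cbfon} then reduces the multipliers to $\lambda_1 = p\,\gamma(V)$ and $\lambda_2 = kp\,\gamma(V) - F_h/\|L_{\myvarfrak{g}}h\|^2$.

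The decisive observation is that the drift contribution $\myvarfrak{g}\myvar{u}^\star = -\lambda_1\myvarfrak{g}L_{\myvarfrak{g}}V^\top + \lambda_2\myvarfrak{g}L_{\myvarfrak{g}}h^\top = (\lambda_2 - k\lambda_1)\myvarfrak{g}L_{\myvarfrak{g}}h^\top$ depends on the multipliers only through the combination $\lambda_2 - k\lambda_1 = -F_h/\|L_{\myvarfrak{g}}h\|^2$, which is manifestly $p$-free. Consequently, the equilibrium identity (a), holding at the initial $p$, forces the $p$-free relation $\myvarfrak{f} = (F_h/\|L_{\myvarfrak{g}}h\|^2)\myvarfrak{g}L_{\myvarfrak{g}}h^\top$, so (a) persists for all $p>0$. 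For (b): since $\myvar{x}_{eq}$ lies on $\partial\myset{C}$ while the origin is interior, $\myvar{x}_{eq}\neq\myvar{0}$, whence $V(\myvar{x}_{eq})>0$ and $\gamma(V)>0$, giving $\lambda_1 = p\,\gamma(V)>0$; condition (iii), $F_h = L_{\myvarfrak{f}}h\le0$, makes $-F_h/\|L_{\myvarfrak{g}}h\|^2\ge0$, so $\lambda_2 = kp\,\gamma(V) - F_h/\|L_{\myvarfrak{g}}h\|^2>0$. Both multipliers stay nonnegative for every $p>0$, so $\myvar{x}_{eq}\in\Omega^{clf}_{cbf,2}$, and combining (a), (b), (c) yields $\myvar{x}_{eq}\in\myset{E}^{clf}_{cbf,2}$ for all $p>0$.

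The only genuine obstacle is algebraic bookkeeping: recognizing that condition (ii) is precisely what annihilates the $p$-independent part of $\Delta$ and, more importantly, makes the specific $p$-dependent combination appearing in $\myvarfrak{g}\myvar{u}^\star$ telescope to the $p$-free constant $-F_h/\|L_{\myvarfrak{g}}h\|^2$. The collinearity is what makes both cancellations occur at once, while condition (iii) plays the lone auxiliary role of keeping $\lambda_2\ge0$.
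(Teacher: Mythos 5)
Your proof is correct, and it follows the paper's overall decomposition (membership in $\Omega^{clf}_{cbf,2}\cap\partial\myset{C}$ plus persistence of the equilibrium identity, cf.\ \eqref{eq:xeq_in_Ome_Boundary}--\eqref{eq:vector_field_Boundary}), but the key step is executed by a genuinely different, more elementary mechanism. The paper never computes the multipliers in closed form: for the equilibrium identity it parametrizes the right-hand side as $\myvar{s}(r)$ with $r=1/p$, differentiates the matrix inverse, and shows $\frac{d\myvar{s}(r)}{dr}=\frac{1}{\Delta^2}(cd-be)(c\myvar{v}_1-b\myvar{v}_2)=\myvar{0}$ because collinearity forces $c\myvar{v}_1-b\myvar{v}_2=\myvar{0}$; separately, it checks the sign of the $\lambda_2$-numerator directly to get domain membership. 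You instead exploit that collinearity degenerates $\Delta$ to $\|L_{\myvarfrak{g}}h\|^2/p$ and reduces \eqref{eq:lam1_2_clfon_cbfon}--\eqref{eq:lam2_2_clfon_cbfon} to the explicit forms $\lambda_1=p\,\gamma(V)$ and $\lambda_2=kp\,\gamma(V)-F_h/\|L_{\myvarfrak{g}}h\|^2$, so that the only combination entering the dynamics, $\lambda_2-k\lambda_1=-F_h/\|L_{\myvarfrak{g}}h\|^2$, is manifestly $p$-free; this single computation settles the identity and both sign conditions (even with strict positivity) at once. What each buys: the paper's calculus argument avoids any explicit inversion and would survive in situations where the closed forms are unwieldy, while yours is pure algebra, exposes the concrete multiplier formulas, and makes transparent \emph{why} the statement is true --- under $\nabla V=k\nabla h$ the equilibrium condition collapses to the $p$-independent relation $\myvarfrak{f}=(F_h/\|L_{\myvarfrak{g}}h\|^2)\,\myvarfrak{g}L_{\myvarfrak{g}}h^\top$, with condition (iii) serving only to keep $\lambda_2\ge 0$, exactly the auxiliary role it plays in the paper.
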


\begin{proof}
   From condition (i), let $p^\diamondsuit$ be the value such that $\myvar{x}_{eq} \in\myset{E}^{clf}_{cbf,2}$ when $p = p^\diamondsuit$, $p^\prime$ an arbitrary positive value, and $\lambda_1^\prime, \lambda_2^\prime$ the associated multipliers when $p = p^\prime$. To prove $\myvar{x}_{eq}  \in\myset{E}^{clf}_{cbf,2} $ for any $p>0$, by definition, we need to show that,  for $p = p^\prime$,
   \begin{align}
      \myvar{x}_{eq}\in \Omega^{clf}_{cbf,2}\cap \partial \myset{C},   \label{eq:xeq_in_Ome_Boundary} \\
        \myvarfrak{f}(\myvar{x}_{eq}) = \lambda_1^\prime \myvarfrak{g} L_{\myvarfrak{g}}V^\top(\myvar{x}_{eq}) - \lambda_2^\prime \myvarfrak{g} L_{\myvarfrak{g}}h^\top(\myvar{x}_{eq}).\label{eq:vector_field_Boundary}
   \end{align}
  This implies $\myvar{x}_{eq}  \in\myset{E}^{clf}_{cbf,2}$ for any $ p >0$, as required.  It is evident that $F_V(\myvar{x}_{eq}), F_h(\myvar{x}_{eq}), L_{\myvarfrak{g}}V(\myvar{x}_{eq}), $ $L_{\myvarfrak{g}}h(\myvar{x}_{eq})$ remain constant no matter how $p$ varies.
   
   
   \textit{Proof to \eqref{eq:xeq_in_Ome_Boundary}:} From condition (i), we know $F_V L_{\myvarfrak{g}}h L_{\myvarfrak{g}}h^\top - F_h L_{\myvarfrak{g}}V L_{\myvarfrak{g}}h^\top \ge 0,  L_{\myvarfrak{g}}h \neq \myvar{0}, \myvar{x}_{eq}\in \partial \myset{C}$. In view of definitions of $F_V, F_h$ and condition (ii), we calculate
       \begin{align}
          & F_V L_{\myvarfrak{g}}V L_{\myvarfrak{g}}h^\top - F_h (1/p^\prime + L_{\myvarfrak{g}}V L_{\myvarfrak{g}}V^\top) \nonumber \\
           & = (L_{\myvarfrak{f}}V + \gamma(V)) L_{\myvarfrak{g}}V L_{\myvarfrak{g}}h^\top - L_{\myvarfrak{f}}h(1/p^\prime + L_{\myvarfrak{g}}V L_{\myvarfrak{g}}V^\top)  \nonumber \\
           & = \gamma(V) L_{\myvarfrak{g}}V L_{\myvarfrak{g}}h^\top - 1/p^\prime L_{\myvarfrak{f}}h  \label{eq:lamda2_Boundary}
       \end{align}

      Since $\gamma(V)\ge 0,  L_{\myvarfrak{g}}V L_{\myvarfrak{g}}h^\top \ge 0$ (condition (ii)), $ 1/p^\prime\ge 0, L_{\myvarfrak{f}}h \le 0$ (condition (iii)), we get \eqref{eq:lamda2_Boundary}$\ge 0$.   Thus, $\myvar{x}_{eq}\in \Omega^{clf}_{cbf,2}\cap \partial \myset{C}$.
   

   \textit{Proof to \eqref{eq:vector_field_Boundary}:} The left-hand side (LHS) of \eqref{eq:vector_field_Boundary} is a constant, yet the right-hand side (RHS) might vary as $p^\prime$ varies. We re-write the RHS as the following function $       \myvar{s}(r) = \begin{bsmallmatrix}
       \myvar{v}_1 & \myvar{v}_2
       \end{bsmallmatrix} \begin{bsmallmatrix}
    r+a & b\\
    b  & c
    \end{bsmallmatrix}^{-1} \begin{bsmallmatrix}
    d \\
    e
    \end{bsmallmatrix}.$
   where  $r = 1/p^\prime\in (0,\infty)$; $\myvar{v}_1 :=\myvarfrak{g}L_{\myvarfrak{g}}V^\top \in \mathbb{R}^n$,  $\myvar{v}_2 := -\myvarfrak{g}L_{\myvarfrak{g}}h^\top \in \mathbb{R}^n$, $a = L_{\myvarfrak{g}}V L_{\myvarfrak{g}}V^\top, b = -L_{\myvarfrak{g}}V L_{\myvarfrak{g}}h^\top, c = L_{\myvarfrak{g}}h L_{\myvarfrak{g}}h^\top, d =  F_V,  e = -F_h $ are constants. Taking the derivative, and noting that $ \begin{bsmallmatrix}
    r+a & b\\
    b  & c
    \end{bsmallmatrix} $ is always invertible (from the proof to Theorem \ref{thm:exact_qp_solution}, in the $\Omega^{clf}_{cbf,2}$ case), we have
    \begin{equation*}
    \begin{aligned}
        \frac{d \myvar{s}(r)}{d r} & = \frac{1}{\Delta^2} \begin{bmatrix}
       \myvar{v}_1 & \myvar{v}_2
       \end{bmatrix} \begin{bmatrix}
    c & -b\\
    -b  & r+a
    \end{bmatrix} \begin{bmatrix}
    1 & 0\\
    0  & 0
    \end{bmatrix}  \begin{bmatrix}
    c & -b\\
    -b  & r+a
    \end{bmatrix} \begin{bmatrix}
    d \\
    e
    \end{bmatrix} \\
    & =  \frac{1}{\Delta^2}  (cd-be)(c\myvar{v}_1 - b\myvar{v}_2) 
    \end{aligned} 
    \end{equation*}
  Here $\Delta = \det(\begin{bsmallmatrix}
    r+a & b\\
    b  & c
    \end{bsmallmatrix})$. One verifies that $c\myvar{v}_1 - b\myvar{v}_2 =L_{\myvarfrak{g}}h L_{\myvarfrak{g}}h^\top \myvarfrak{g}L_{\myvarfrak{g}}V^\top - L_{\myvarfrak{g}}V L_{\myvarfrak{g}}h^\top \myvarfrak{g}L_{\myvarfrak{g}}h^\top =\myvar{0}$ in view of condition (ii). Thus, $\frac{d \myvar{s}(r)}{d r} = \myvar{0}$ and we obtain that the RHS of \eqref{eq:vector_field_Boundary} remains constant as $p^\prime$ varies. Note that $\myvar{s}(1/p^\diamondsuit) = \myvarfrak{f}(\myvar{x}_{eq})$, thus \eqref{eq:vector_field_Boundary} holds.
\end{proof}

\begin{example}
 Proposition \ref{prop:boundary_equilibrium} dictates that the boundary equilibrium $(0, 6)$ in Example \ref{ex:-x} will exist for any $p>0$.  This conclusion matches what we observe in Fig. \ref{fig:simulated_example1_trajectory}. 
 
 \end{example}

\section{A modified QP-based control formulation} \label{sec:new_qp_formulation}
In this section, we propose a modified CLF-CBF based control formulation.
Consider the nonlinear control affine system in \eqref{eq:nonlinear_dyn} with  a control Lyapunov function(CLF) $V$ and a control barrier function(CBF) $h$. The proposed control formulation is given as follows. Let a nominal controller $\myvar{u}_{nom}:\mathbb{R}^{n} \to \mathbb{R}^{m}$ be locally Lipschitz continuous. Rewrite \eqref{eq:nonlinear_dyn} as
\begin{equation} \label{eq:transformed_sys}
    \begin{aligned}
        \dot{\myvar{x}} =  \myvarfrak{f}^{\prime}(\myvar{x}) +\myvarfrak{g}(\myvar{x})\myvar{u}^{\prime}(\myvar{x}),
    \end{aligned}
\end{equation}
where $\myvarfrak{f}^{\prime}(\myvar{x}) := \myvarfrak{f}(\myvar{x}) +  \myvarfrak{g}(\myvar{x})\myvar{u}_{nom}(\myvar{x}),  \myvar{u}^{\prime}(\myvar{x}) := \myvar{u}(\myvar{x}) - \myvar{u}_{nom}(\myvar{x})$. In the following we will solve a new quadratic program to derive the virtual control input $\myvar{u}^{\prime}(\myvar{x})$ and the actual control input is then obtained by
\begin{equation} \label{eq:u_safe_and_stabilizing}
    \myvar{u}(\myvar{x}) = \myvar{u}_{nom}(\myvar{x}) + \myvar{u}^{\prime}(\myvar{x}).
\end{equation}
The virtual control input $\myvar{u}^\prime$ is calculated by the following quadratic program with a positive scalar $p$:
\begin{align} 
    & \  \min_{(\myvar{u}^\prime,\delta)\in \mathbb{R}^{m+1}} \frac{1}{2} \| \myvar{u}^\prime \|^2 + \frac{1}{2}p\delta^2 \label{eq:QP_with_u_prime} \\
    s.t. \ & L_{\myvarfrak{f}^\prime} V(\myvar{x}) + L_{\myvarfrak{g}}V(\myvar{x})\myvar{u}^\prime +\gamma(V(\myvar{x})) \le \delta, \tag{CLF}\\
    & L_{\myvarfrak{f}^\prime}h(\myvar{x}) + L_{\myvarfrak{g}}h(\myvar{x})\myvar{u}^\prime + \alpha(h(\myvar{x})) \ge 0. \tag{CBF}
\end{align}


\bluetext{
Before presenting our main result, we examine the continuity property of the resulting controller in \eqref{eq:u_safe_and_stabilizing}. We denote  $ F_h^{\prime}(\myvar{x})  :=  L_{\myvarfrak{f}^\prime} h(\myvar{x})  +\alpha(h(\myvar{x})), F_V^{\prime}(\myvar{x})  :=  L_{\myvarfrak{f}^\prime} V(\myvar{x})  +\gamma(V(\myvar{x})) $ in the following analysis.

\vspace{2mm}
\begin{proposition} \label{prop:local lipschitz}
 The control input $\myvar{u}: \mathbb{R}^{n} \to \mathbb{R}^{m}$  in \eqref{eq:u_safe_and_stabilizing} is locally Lipschitz continuous if one of the following conditions hold:  i) $L_{\myvarfrak{g}}h(\myvar{x}) \neq \myvar{0}$ for all $\myvar{x}$; or ii)  $\myset{M}  := \{ \myvar{x} \in \mathbb{R}^n:  F_h^{\prime}(\myvar{x}) = 0,  L_{\myvarfrak{g}}h(\myvar{x}) = \myvar{0} \}$ is empty.
\end{proposition}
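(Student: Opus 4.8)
The plan is to reduce the claim to a constraint-qualification check for a parametric quadratic program. Since $\myvar{u}_{nom}$ is locally Lipschitz by hypothesis, it suffices to show that the virtual input $\myvar{u}^\prime$ solving \eqref{eq:QP_with_u_prime} is locally Lipschitz. I would first observe that, with decision variable $\myvar{z} = (\myvar{u}^\prime,\delta)$, the program \eqref{eq:QP_with_u_prime} is strictly convex, having the constant positive definite Hessian $\mathrm{diag}(I_m, p)$, and that its two affine constraints have coefficients $L_{\myvarfrak{g}}V, L_{\myvarfrak{g}}h, F_V^\prime, F_h^\prime$ that are locally Lipschitz in $\myvar{x}$ (note $\myvarfrak{f}^\prime = \myvarfrak{f} + \myvarfrak{g}\myvar{u}_{nom}$ is locally Lipschitz). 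I would also check that $h$ remains a CBF for the transformed dynamics \eqref{eq:transformed_sys}, which follows from \eqref{eq:cbf} after the change of variables $\myvar{u} = \myvar{u}_{nom} + \myvar{u}^\prime$, so that the program is feasible and has a unique minimizer for every $\myvar{x}$. With this in place, I would invoke the standard sensitivity theory for strictly convex parametric QPs (in the spirit of \cite{morris2015continuity}): the minimizer is locally Lipschitz around any $\myvar{x}_0$ at which the gradients (with respect to $\myvar{z}$) of the \emph{active} constraints are linearly independent, i.e.\ at which LICQ holds. The task thus becomes verifying LICQ at every $\myvar{x}_0$ under condition i) or ii).

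Next I would carry out the LICQ bookkeeping. The constraint gradients with respect to $\myvar{z}$ are $\myvar{a}_V = (L_{\myvarfrak{g}}V^\top, -1)$ for the CLF constraint and $\myvar{a}_h = (L_{\myvarfrak{g}}h^\top, 0)$ for the CBF constraint. Because $\myvar{a}_V$ is never zero (its last entry is $-1$), LICQ is automatic whenever at most the CLF constraint is active. When the CBF constraint is active, linear dependence of the active gradients can occur only if $\myvar{a}_h$ is a multiple of $\myvar{a}_V$, or $\myvar{a}_h = \myvar{0}$ in the single-active case; matching the last entry forces that multiple to vanish. Hence the only way LICQ can fail is that the CBF constraint is active at the minimizer \emph{and} $L_{\myvarfrak{g}}h(\myvar{x}_0) = \myvar{0}$.

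It then remains to rule out this degenerate configuration under each hypothesis. Under condition i), $L_{\myvarfrak{g}}h \neq \myvar{0}$ everywhere, so the bad case never arises and LICQ holds globally. Under condition ii), I would argue that at any $\myvar{x}_0$ with $L_{\myvarfrak{g}}h(\myvar{x}_0) = \myvar{0}$ the CBF constraint reduces to $F_h^\prime(\myvar{x}_0) \ge 0$, independent of $\myvar{u}^\prime$; feasibility forces $F_h^\prime(\myvar{x}_0) \ge 0$, and $\myvar{x}_0 \notin \myset{M}$ forces $F_h^\prime(\myvar{x}_0) \neq 0$, so $F_h^\prime(\myvar{x}_0) > 0$ and the CBF constraint is strictly satisfied, hence inactive, at the minimizer. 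In both cases LICQ holds at every point, giving local Lipschitz continuity of $\myvar{u}^\prime$ and therefore of $\myvar{u} = \myvar{u}_{nom} + \myvar{u}^\prime$.

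The main obstacle I anticipate is the citation-level justification of the sensitivity step: I must invoke a form of the parametric-QP result that yields \emph{local Lipschitz} continuity (not merely continuity) under LICQ \emph{without} assuming strict complementarity, since the active set may change across $\myvar{x}_0$. The explicit solution in Theorem \ref{thm:exact_qp_solution} (with $\myvarfrak{f}$ replaced by $\myvarfrak{f}^\prime$) corroborates this picture: the only branches that could blow up are those divided by $L_{\myvarfrak{g}}h L_{\myvarfrak{g}}h^\top$ or $\Delta$, and exactly the arguments above keep these denominators bounded away from zero on a neighborhood of each $\myvar{x}_0$. If a fully self-contained proof were preferred, I would instead verify Lipschitz continuity directly from that formula, the remaining care being the continuous gluing of the pieces across the region boundaries.
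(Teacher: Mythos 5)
Your proposal is correct, and its treatment of the crucial degenerate case is genuinely different from the paper's. For condition i) the two arguments coincide: the paper notes the constraint matrix $\begin{bsmallmatrix} L_{\myvarfrak{g}}V & 1\\ -L_{\myvarfrak{g}}h & 0\end{bsmallmatrix}$ has full row rank and invokes Hager's Theorem 3.1, which is exactly your LICQ observation. For condition ii), however, the paper does not argue via constraint qualifications: it works region by region through the explicit solution of Theorem \ref{thm:exact_qp_solution}, shows that points with $L_{\myvarfrak{g}}h=\myvar{0}$ can only lie in $\Omega^{\overline{clf}}_{\overline{cbf}}$ or $\Omega^{clf}_{\overline{cbf}}$ once $\myset{M}=\emptyset$, and proves that $\myvar{u}^{\prime}\equiv\myvar{0}$ on a whole neighborhood of any such point -- an argument that crucially uses $F_V^{\prime}\le 0$, i.e.\ that $\myvar{u}_{nom}$ satisfies the CLF condition (a hypothesis the paper's proof imports from the surrounding Theorem \ref{thm:safe_and_stability} context even though it is absent from the proposition statement). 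Your route instead observes that under ii) the CBF constraint is strictly slack wherever its gradient degenerates, so LICQ holds at every minimizer; this is cleaner, skips the case analysis, and notably does not need $\myvar{u}_{nom}$ to be stabilizing at all. What you pay for this generality is exactly what you flagged: you need a parametric-QP result giving \emph{local Lipschitz} continuity under LICQ without strict complementarity (Robinson's strong regularity for strongly convex QPs suffices, since positive definiteness of the Hessian gives the second-order condition for free); alternatively, you can stay with the paper's citation of Hager by locally discarding the strictly inactive CBF constraint, but then you must show it \emph{remains} inactive at minimizers for nearby parameters, which requires a local bound on $\|\myvar{u}^{\prime}\|$ (e.g.\ by comparing the optimal value with the feasible point $(\myvar{0},\max(0,F_V^{\prime}))$) -- a step your sketch gestures at but should spell out. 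Your fallback of verifying Lipschitz continuity directly from the explicit formula and gluing across region boundaries is, in essence, the paper's own proof.
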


\begin{proof}
    Since $\myvar{u}(\myvar{x}) = \myvar{u}_{nom}(\myvar{x}) + \myvar{u}^{\prime}(\myvar{x})$ and $\myvar{u}_{nom}(\myvar{x})$ is locally Lipschitz continuous, what we need to show is that $\myvar{u}^{\prime}(\myvar{x})$ is also locally Lipschitz continuous. Since $V(x)$ and $h(x)$ are smooth, the vector fields $\myvarfrak{f}(\myvar{x})$ and $\myvarfrak{g}(\myvar{x})$ are locally Lipschitz, thus $L_{\myvarfrak{f}^\prime} V(\myvar{x}), L_{\myvarfrak{g}}V(\myvar{x}), \gamma(V(\myvar{x})),   L_{\myvarfrak{f}^\prime}h(\myvar{x}), L_{\myvarfrak{g}}h(\myvar{x}),  \alpha(h(\myvar{x})) $ are locally Lipschitz.

    If condition (i) holds, then the coefficient matrix $\begin{bsmallmatrix}
L_{\myvarfrak{g}}V(\myvar{x}) & 1 \\
-L_{\myvarfrak{g}}h(\myvar{x}) & 0
\end{bsmallmatrix} \in \mathbb{R}^{2\times (m+1)}$ of the decision variable $(\myvar{u}^\prime, \delta)$  is of full row rank. Following \cite[Thoerem 3.1]{hager1979lipschitz}, the locally Lipschitz continuity of $\myvar{u}^\prime (\myvar{x})$ is obtained. If condition (i) does not hold, then for $\myvar{x}$ that $L_{\myvarfrak{g}}h(\myvar{x}) \neq \myvar{0}$, there exists a neighborhood $O(\myvar{x})$ in which $L_{\myvarfrak{g}}h(\myvar{y}) \neq \myvar{0}, \forall \myvar{y} \in O(\myvar{x})$, and the locally Lipschitz continuity holds following the same reasoning. In what follows we will examine the Lipschitz continuity property at which $L_{\myvarfrak{g}}h(\myvar{x}) = \myvar{0}$. 

Considering the solution to the quadratic program in \eqref{eq:QP_with_u_prime}, from Theorem \ref{thm:exact_qp_solution}, there are four potential regions when $L_{\myvarfrak{g}}h(\myvar{x}) $ could vanish: $\Omega^{\overline{clf}}_{\overline{cbf}}, \Omega^{\overline{clf}}_{cbf,1}, \Omega^{clf}_{\overline{cbf}}, \Omega^{clf}_{cbf,1}$. If condition (ii) holds, we know $\Omega^{\overline{clf}}_{cbf,1}=\Omega^{clf}_{cbf,1} = \emptyset$. Note that $\myvar{u}^{\prime}(\myvar{x}) = \myvar{0}$ for all $\myvar{x}\in \Omega^{\overline{clf}}_{\overline{cbf}}$, and $\Omega^{\overline{clf}}_{\overline{cbf}}$ is an open set, then the local Lipschitz continuity holds within $\Omega^{\overline{clf}}_{\overline{cbf}}$.       Now we check the set $ \Omega^{clf}_{\overline{cbf}}$. From the explicit form given in \eqref{eq:qp_solution},  $\myvar{u}^{\prime}(\myvar{x})$ is locally Lipschitz in $\text{Int}(\Omega^{clf}_{\overline{cbf}})$. What remains to check is the state $\myvar{x}\in \partial \Omega^{clf}_{\overline{cbf}} \cap \Omega^{clf}_{\overline{cbf}}  $ and $L_{\myvarfrak{g}}h(\myvar{x}) = \myvar{0}$, i.e., $F_V^{\prime} = 0, F_V^{\prime} L_{\myvarfrak{g}} h L_{\myvarfrak{g}} V^\top  - F_h^{\prime} (1/p + L_{\myvarfrak{g}}V L_{\myvarfrak{g}}V^\top) = - F_h^{\prime} (1/p + L_{\myvarfrak{g}}V L_{\myvarfrak{g}}V^\top)< 0$. From \eqref{eq:qp_solution}, at those points, $\myvar{u}^{\prime}(\myvar{x}) = \myvar{0}$. Take a small neighborhood $O(\myvar{x})$, then, for all $\myvar{y}\in O(\myvar{x})$, $F_h^{\prime}(\myvar{y}) >0$ (from continuity), $ F_V^{\prime}(\myvar{y}) \leq 0$ (from $\myvar{u}_{nom}$ being stabilizing), $ F_V^{\prime} L_{\myvarfrak{g}} h L_{\myvarfrak{g}} V^\top  - F_h^{\prime} (1/p + L_{\myvarfrak{g}}V L_{\myvarfrak{g}}V^\top) <0$ (from continuity).
     This implies for any $\myvar{y}\in O(\myvar{x})$,  either one of the following cases holds: a.) $F_V^{\prime}(\myvar{y}) < 0, F_h^{\prime}(\myvar{y}) >0$; b.) $F_V^{\prime}(\myvar{y}) =0, F_V^{\prime} L_{\myvarfrak{g}} h L_{\myvarfrak{g}} V^\top  - F_h^{\prime} (1/p + L_{\myvarfrak{g}}V L_{\myvarfrak{g}}V^\top)  =  - F_h^{\prime} (1/p + L_{\myvarfrak{g}}V L_{\myvarfrak{g}}V^\top) <0.$
     In both cases, from Theorem \ref{thm:exact_qp_solution}, $\myvar{u}^{\prime}(\myvar{y})=\myvar{0}$. Thus,  $\myvar{u}^{\prime}(\myvar{x})$ is locally Lipschitz continuous in this case as well. To sum up, we have shown the control input $\myvar{u}(\myvar{x})$ is locally Lipschitz continuous. 
\end{proof}

\begin{remark}
 One could verify that the first condition in  Proposition \ref{prop:local lipschitz}  fails in Example 5 and Example 6 while second condition holds. This demonstrates the generality of Proposition \ref{prop:local lipschitz}. We also note that the set $\myset{M}$ is independent of the choice of  $\myvar{u}_{nom}$ since $F_{h}^{\prime}(\myvar{x}) =  L_{\myvarfrak{f}^\prime} h(\myvar{x})  +\alpha(h(\myvar{x})) = L_{\myvarfrak{f}} h(\myvar{x})  +\alpha(h(\myvar{x})) = F_h(\myvar{x})$ when $L_{\myvarfrak{g}}h(\myvar{x}) = \myvar{0} $. 
\end{remark}
}

\vspace{2mm}
\begin{theorem} \label{thm:safe_and_stability}
Consider the nonlinear control affine system in \eqref{eq:nonlinear_dyn} with a control Lyapunov function(CLF) $V$ and a control barrier function(CBF) $h$ with its associated safety set $\myset{C}$. \bluetext{Assume one of the two conditions in Proposition \ref{prop:local lipschitz} holds.} Let the nominal control $\myvar{u}_{nom}$  satisfy the CLF condition \eqref{eq:clf}, and the control input in \eqref{eq:u_safe_and_stabilizing} is applied to \eqref{eq:nonlinear_dyn},  then 
\begin{enumerate}
    \item the set $\myset{C}$ is forward invariant;
    \item no interior equilibrium points exist except the origin;
   \item no boundary equilibrium points exist with $L_{\myvarfrak{g}}h = \myvar{0}$;
    \item  the origin is locally asymptotically stable. 
\end{enumerate}
\end{theorem}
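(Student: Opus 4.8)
The entire theorem hinges on one observation that I would isolate at the outset: because $\myvar{u}_{nom}$ satisfies the CLF condition \eqref{eq:clf}, we have $L_{\myvarfrak{f}}V + L_{\myvarfrak{g}}V\myvar{u}_{nom} + \gamma(V) \le 0$, i.e. $F_V^\prime(\myvar{x}) = L_{\myvarfrak{f}^\prime}V(\myvar{x}) + \gamma(V(\myvar{x})) \le 0$ for every $\myvar{x} \in \mathbb{R}^n$. This global sign condition on the transformed drift is what every item exploits. Item (1) is then immediate: the solution of \eqref{eq:QP_with_u_prime} satisfies the CBF constraint pointwise, so along the closed loop $\dot h = L_{\myvarfrak{f}^\prime}h + L_{\myvarfrak{g}}h\myvar{u}^\prime \ge -\alpha(h)$; since the standing hypothesis invokes Proposition \ref{prop:local lipschitz}, $\myvar{u}$ is locally Lipschitz, solutions are unique, and Brezis' version of Nagumo's theorem delivers forward invariance of $\myset{C}$ exactly as for the original program.

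For items (2) and (3), the plan is to apply Theorem \ref{thm:equilibrium_points} to the transformed system (replacing $\myvarfrak{f}, F_V, F_h$ by $\myvarfrak{f}^\prime, F_V^\prime, F_h^\prime$, noting that the actual closed-loop field $\myvarfrak{f}^\prime + \myvarfrak{g}\myvar{u}^\prime$ coincides with $\myvarfrak{f} + \myvarfrak{g}\myvar{u}$). Interior equilibria lie in $\Omega^{clf}_{\overline{cbf}}$ and the $L_{\myvarfrak{g}}h = \myvar{0}$ boundary equilibria lie in $\Omega^{clf}_{cbf,1}$; both regions require $F_V^\prime \ge 0$, which together with $F_V^\prime \le 0$ forces $F_V^\prime = 0$. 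In both cases the equilibrium condition reads $\myvarfrak{f}^\prime = p\gamma(V)\myvarfrak{g}L_{\myvarfrak{g}}V^\top$, and left-multiplying by $\nabla V^\top$ gives $L_{\myvarfrak{f}^\prime}V = p\gamma(V)L_{\myvarfrak{g}}V L_{\myvarfrak{g}}V^\top \ge 0$, whereas $F_V^\prime = 0$ means $L_{\myvarfrak{f}^\prime}V = -\gamma(V) \le 0$; hence $\gamma(V) = 0$, so $V = 0$ and $\myvar{x} = \myvar{0}$. Since $\myvar{0} \in \textup{Int}(\myset{C})$ by Assumption \ref{ass:CLF_CBF}, the boundary case is vacuous, and the unique interior equilibrium is the origin.

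Item (4) is where the real work lies, and I would split it in two. First I would show the origin is genuinely an equilibrium: $F_V^\prime \le 0$ with $F_V^\prime(\myvar{0}) = 0$ makes $V$ a strict Lyapunov function for the auxiliary field $\dot{\myvar{x}} = \myvarfrak{f}^\prime(\myvar{x})$ near the origin, and a short comparison argument shows that nearby trajectories converge to $\myvar{0}$, which is impossible unless $\myvarfrak{f}^\prime(\myvar{0}) = \myvar{0}$. Consequently $L_{\myvarfrak{f}^\prime}h(\myvar{0}) = 0$ and $F_h^\prime(\myvar{0}) = \alpha(h(\myvar{0})) > 0$ because $\myvar{0} \in \textup{Int}(\myset{C})$, so by continuity $F_h^\prime > 0$ on a neighborhood $O$ of the origin and the CBF constraint is inactive there. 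Feeding $F_h^\prime > 0$ and $F_V^\prime \le 0$ into Theorem \ref{thm:exact_qp_solution} shows $\myvar{u}^\prime = \myvar{0}$ throughout $O$ (points with $F_V^\prime < 0$ fall in $\Omega^{\overline{clf}}_{\overline{cbf}}$ and those with $F_V^\prime = 0$ in $\Omega^{clf}_{\overline{cbf}}$, both returning $\myvar{u}^\prime = \myvar{0}$). Therefore $\dot V = L_{\myvarfrak{f}^\prime}V = F_V^\prime - \gamma(V) \le -\gamma(V) < 0$ for $\myvar{x} \in O \setminus \{\myvar{0}\}$, and Lyapunov's direct method yields local asymptotic stability.

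The main obstacle is precisely this last item: certifying that the origin is an equilibrium (the step $\myvarfrak{f}^\prime(\myvar{0}) = \myvar{0}$, which must be extracted from the CLF inequality alone since $\myvarfrak{f}(\myvar{0}) = \myvar{0}$ is not assumed) and that the slack-inducing CBF constraint switches off in a neighborhood of the origin, so that the program returns the unmodified stabilizing behavior $\myvar{u}^\prime = \myvar{0}$ and $V$ decreases strictly. Items (1)--(3), by contrast, are essentially bookkeeping organized around the single inequality $F_V^\prime \le 0$.
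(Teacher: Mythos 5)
Your proof is correct and follows essentially the same route as the paper's: forward invariance via Proposition \ref{prop:local lipschitz} and Brezis/Nagumo, elimination of interior and $L_{\myvarfrak{g}}h = \myvar{0}$ boundary equilibria by contradicting the global inequality $F_V^\prime \le 0$ (from $\myvar{u}_{nom}$ satisfying \eqref{eq:clf}) with the equilibrium condition $\myvarfrak{f}^\prime = p\gamma(V)\myvarfrak{g}L_{\myvarfrak{g}}V^\top$ left-multiplied by $\nabla V^\top$, and local asymptotic stability by showing $F_h^\prime > 0$ near the origin so the QP returns the unmodified decrease $\dot V \le -\gamma(V)$. The only cosmetic differences are that the paper tracks $\delta^\star(\myvar{x}) = 0$ and invokes the CLF constraint where you track $\myvar{u}^\prime = \myvar{0}$ directly (both rest on the same case analysis of the domains in Theorem \ref{thm:exact_qp_solution}), and that you supply a justification for $\myvarfrak{f}^\prime(\myvar{0}) = \myvar{0}$ that the paper merely asserts.
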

\begin{proof}
   \bluetext{ From Proposition \ref{prop:local lipschitz}, the controller is locally Lipschitz continuous and thus the system admits a unique solution.} Consider the transformed system $\dot{\myvar{x}} = \myvarfrak{f}^{\prime}(\myvar{x}) +\myvarfrak{g}(\myvar{x})\myvar{u}^{\prime}(\myvar{x})$. Since $h$ is a CBF for the original system in \eqref{eq:nonlinear_dyn}, i.e., 
     $\forall \myvar{x} \in \mathbb{R}^{n}, \exists \myvar{u}\in \mathbb{R}^{m} \text{ such that }  L_{\myvarfrak{f}}h(\myvar{x}) + L_{\myvarfrak{g}}h(\myvar{x})\myvar{u} + \alpha(h(\myvar{x})) \ge 0 $,  we obtain that $\forall \myvar{x} \in \mathbb{R}^{n}, \myvar{u}_{nom}  \in \mathbb{R}^{m}, \exists \myvar{u}^{\prime} \in \mathbb{R}^{m} \text{ such that }  L_{\myvarfrak{f}}h(\myvar{x}) + L_{\myvarfrak{g}}h(\myvar{x})\myvar{u}_{nom} +L_{\myvarfrak{g}}h(\myvar{x})\myvar{u}^{\prime} +  \alpha(h(\myvar{x})) \ge 0 $  by choosing $\myvar{u}^{\prime} = \myvar{u} - \myvar{u}_{nom}$. Thus $h$ is also a CBF for the transformed system in \eqref{eq:transformed_sys}. It further indicates that the quadratic program in \eqref{eq:QP_with_u_prime} is feasible for all $ \myvar{x} \in \mathbb{R}^n$. $V$ is also a valid CLF for the transformed system since the CLF condition in \eqref{eq:clf} is fulfilled with $\myvar{u}^{\prime} = \myvar{0}$. Using Brezis' version of Nagumo's Theorem \cite{xiao2021high}, we further obtain that the  resulting $\myvar{u}^\prime$ will render the safety set forward invariant. 
    
    Assume that there exists an  equilibrium point $\myvar{x}_{eq}, \myvar{x}_{eq}\neq \myvar{0}$ that lies either in  $ \textup{Int}(\myset{C})$ or in $\partial \myset{C}$ with $L_{\myvarfrak{g}}h(\myvar{x}_{eq}) = \myvar{0}$. From Theorem \ref{thm:equilibrium_points}, we know that $     \myvarfrak{f}^\prime (\myvar{x}_{eq}) = p\gamma(V(\myvar{x}_{eq}))$ $ \myvarfrak{g}(\myvar{x}_{eq}) L_{\myvarfrak{g}}V^\top(\myvar{x}_{eq}).$
    By left multiplying $\nabla V^{\top}$ on both sides, we further obtain that $ L_{\myvarfrak{f}^\prime} V (\myvar{x}_{eq}) = p \gamma(V(\myvar{x}_{eq})) L_{\myvarfrak{g}} V(\myvar{x}_{eq}) L_{\myvarfrak{g}} V^\top (\myvar{x}_{eq}).$
   For any positive number $p$ and any  $\myvar{x}_{eq}\neq \myvar{0}$, on the right-hand side, we know $\gamma(V(\myvar{x}_{eq})) >0,L_{\myvarfrak{g}} V(\myvar{x}_{eq}) L_{\myvarfrak{g}} V^\top (\myvar{x}_{eq})\ge 0 $. Since $\myvar{u}_{nom}(\myvar{x})$ satisfies the CLF condition, we obtain  $L_{\myvarfrak{f}^\prime} V (\myvar{x}_{eq}) = L_{\myvarfrak{f}}V + L_{\myvarfrak{g}}V \myvar{u}_{nom} \leq - \gamma(V(\myvar{x}_{eq})) < 0 $
    on the left-hand side. Thus it yields a contradiction, implying Properties 2) and 3).

   Since $\myvar{u}_{nom}(\myvar{x})$ satisfies the CLF condition,  we have 
   $ \myvarfrak{f}^{\prime}(\myvar{0}) = \myvar{0}, 
   F^{\prime}_V := L_{\myvarfrak{f}^\prime} V  + \gamma(V) \leq 0 $ for all $\myvar{x} \in \mathbb{R}^n$. Note that $ F^{\prime}_h(\myvar{0}) = L_{\myvarfrak{f}^\prime} h(\myvar{0}) + \alpha(h(\myvar{0})) = \alpha(h(\myvar{0})) > 0$. By continuity, we know that there exists an $\epsilon >0$ such that for all $\myvar{x} \in \myset{B}_{\epsilon}:=\{ \myvar{x}\in \mathbb{R}^n: \| \myvar{x}\| \leq \epsilon\}$, $F^{\prime}_h(\myvar{x}) > 0$. Applying  Theorem \ref{thm:exact_qp_solution} with respect to the quadratic program in \eqref{eq:QP_with_u_prime}, we next show that for all $\myvar{x} \in \myset{B}_{\epsilon}$, the optimal solution is $\delta^{\star}(\myvar{x}) = 0$. This fact is obtained by examining $\delta(\myvar{x})$ in every domain and keeping in mind that 1) if $F^{\prime}_V(\myvar{x}) < 0$, then $\myvar{x} \in \Omega^{\overline{clf}}_{\overline{cbf}}$ and $\delta(\myvar{x}) = 0$; 2) if $F^{\prime}_V(\myvar{x}) =0$, then $\myvar{x}$ lies in $ \Omega^{\overline{clf}}_{cbf} \cup \Omega^{clf}_{\overline{cbf}} \cup \Omega^{clf}_{cbf,1}  $, and $\delta(\myvar{x}) = \lambda_1(\myvar{x})/p = 0$ by examining their respective $\lambda_1(\myvar{x})$s. We further obtain that $L_{\myvarfrak{f}^\prime} V(\myvar{x}) + L_{\myvarfrak{g}}V(\myvar{x})\myvar{u}^\prime +\gamma(V(\myvar{x})) \le  \delta(\myvar{x}) = 0$ for all $\myvar{x} \in \myset{B}_{\epsilon}$, i.e., $\dot{V}(\myvar{x}) = L_{\myvarfrak{f}^\prime} V(\myvar{x}) + L_{\myvarfrak{g}}V(\myvar{x})\myvar{u}^\prime \le  -\gamma(V(\myvar{x})) $ for $\myvar{x} \in \myset{B}_{\epsilon}$. With a standard Lyapunov argument\cite{Khalil2002}, we then deduce that the origin is locally asymptotically stable.    \end{proof}

\begin{remark}
In fact, any locally Lipschitz $\myvar{u}_{nom}:\mathbb{R}^n \to \mathbb{R}^m$  that renders $L_{\myvarfrak{f} } V + L_{ \myvarfrak{g} \myvar{u}_{nom}}V$ negative definite and  satisfies $\myvarfrak{f}(\myvar{0}) +\myvarfrak{g}\myvar{u}_{nom}(\myvar{0}) = \myvar{0}$ is a valid nominal controller in the new QP formulation in \eqref{eq:QP_with_u_prime}. The proof can be carried out in a similar manner.
\end{remark}

\vspace{2mm}
\begin{remark}
The proposed formulation is favorable in many regards. Assumption-wise, what it requires (Assumption 1) is the same as that of the original quadratic program \eqref{eq:original_QP}. Computation-wise, this new formulation does not add extra computations since the CLF-compatible $\myvar{u}_{nom}$ can be obtained in an analytical form\cite{sontag1989universal}. Finally, the proposed formulation provides stronger theoretical guarantees (Properties 2)-4)) on system stability while maintaining the same guarantee on system safety.
\end{remark}

\vspace{2mm}
\begin{remark}
Two CBF-based control formulations have been proposed in \cite{Xu2015a,Ames2017,Ames2019control}: one uses a nominal controller incorporating a CBF constraint \cite[Equation~(CBF-QP)]{Ames2019control}, the other  utilizes a CLF and a CBF (\cite[Equation~(CLF-CBF-QP)]{Ames2019control}, also in \eqref{eq:original_QP}). In our proposed formulation, both a CLF $V(\myvar{x})$ and a CLF-compatible  $\myvar{u}_{nom}$ are needed. One could view our modification as a combination of the two formulations in \cite{Ames2019control}. The rationale behind this modification is that, given  a CLF $V(\myvar{x})$, calculating $\myvar{u}_{nom}$ is straightforward from \cite{sontag1989universal}, and adding this extra information will guide the selection of a stabilizing control input from all the feasible inputs. Based on this interpretation, the resulting controller being similar to $u_{nom}$ is an intended result. In Theorem \ref{thm:safe_and_stability}, we prove that this modification helps removing the undesired equilibria and aligning the resulting controller to a stabilizing controller.
\end{remark}

\vspace{2mm}
\begin{remark}
It is tempting to claim from Theorem \ref{thm:safe_and_stability} that the resulting controller guarantees that all integral curves converge to origin. Yet  in general  this is not true because 1) the integral curves may converge to the equilibrium points on $\partial \myset{C}$; 2) limit cycles, or other types of attractors may exist in the closed-loop system. We note that the possibility of system trajectory converging to the boundary equilibirum point is not a result of our modification, but an inherent property of the CLF-CBF-QP formulation as discussed in Proposition \ref{prop:boundary_equilibrium}. Actually, for the scenario in Fig. \ref{fig:simulated_example1_trajectory}, global convergence  with a smooth vector field is impossible due to topological obstruction \cite{koditschek1990robot}.
\end{remark}

\vspace{2mm}
\bluetext{
\begin{remark}[Region of attraction]
One can establish a conservative estimate of the region of attraction (ROA)  for the modified CLF-CBF-QP controller. This ROA is given by a sub-level set of the control Lyapunov function, within which the CLF condition holds. Consider the QP in \eqref{eq:QP_with_u_prime}. From Theorem \ref{thm:exact_qp_solution} and the fact that $ F^{\prime}_V(\myvar{x}) \leq 0 $ for all $\myvar{x} \in \mathbb{R}^n$, we know that for all  $\myvar{x}\in \Omega^{\overline{clf}}_{\overline{cbf}}\cup \Omega^{\overline{clf}}_{cbf,1}\cup \Omega^{\overline{clf}}_{cbf,2}\cup \Omega^{clf}_{\overline{cbf}} \cup \Omega^{clf}_{cbf,1}$,  $  L_{\myvarfrak{f}} V(\myvar{x}) + L_{\myvarfrak{g}}V(\myvar{x})\myvar{u} = L_{\myvarfrak{f}^{\prime}} V(\myvar{x}) + L_{\myvarfrak{g}}V(\myvar{x})\myvar{u}^{\prime} \leq - \gamma(V(\myvar{x})) $.  Define $\myset{A}_a:= \{ \myvar{x}: V(\myvar{x}) \leq a \} $ for some $a>0$, and  $\eta:=\arg\sup_{a} a  $ such that $\myset{A}_{a}\cap \Omega^{clf}_{cbf,2} = \emptyset. $ The estimated ROA is then given by $\myset{A}_{\eta}$.
\end{remark}
}

\vspace{2mm}
\begin{example} \label{ex:new_qp_single_integrator}
Consider a mobile robot whose dynamics is given in \eqref{eq:single_integrator_x} with its position  $(x_1,x_2)$ in $\mathbb{R}^2$.  This robot is tasked to navigate to the origin while avoiding a circular region. If the original QP in \eqref{eq:original_QP} is applied, as shown in Fig. \ref{fig:simulated_example3_trajectory},   the mobile robot can at best reach a neighborhood region of the origin, the size of which is determined by $p$. If the new control formulation in \eqref{eq:u_safe_and_stabilizing} is applied, and we choose $\myvar{u}_{nom} = -2x$, then the transformed system dynamics is given in \eqref{eq:single_integrator_-x}. From Fig. \ref{fig:simulated_example1_trajectory}, we observe that the robot can reach the origin, and not merely a neighborhood of it,  no matter what value of $p$ is chosen. We also observed that in both cases, the robot may get stuck at $(0,6)$. \bluetext{In \cite{mestres2022optimization}, a numerical comparison was carried out between our method and the approach therein for this scenario, which results in similar closed-loop system behavior.  We note that our estimated ROA  $ \{\myvar{x}: x_1^2 + x_2^2\leq 3.3^2 \}$ is larger than the counterpart  $\{\myvar{x}: x_1^2 + x_2^2\leq 2^2 \}$ in \cite{mestres2022optimization}.}
\end{example}

\vspace{2mm}
\begin{example} \label{ex:new_qp_high_order}
Now we consider a second-order mobile robot whose dynamics is given in \eqref{eq:high_order_example} with  the position state  $x_1$ and velocity state $x_2$. This robot is tasked to navigate to $0$ while its state needs to avoid the region in dark green in  Fig. \ref{fig:simulated_example2_trajectory}. If the original QP in \eqref{eq:original_QP} is applied, then the robot will stop at certain undesired points instead of the $0$ position. If the new control formulation in \eqref{eq:u_safe_and_stabilizing} is applied, and we choose $\myvar{u}_{nom} = -2x_1 -x_2$, then the transformed system dynamics is $    \begin{psmallmatrix}
    \dot{x}_1 \\
    \dot{x}_2
    \end{psmallmatrix} = \begin{psmallmatrix}
      x_2\\
     -x_1 -x_2
    \end{psmallmatrix} + \begin{psmallmatrix}
    0\\
    1
    \end{psmallmatrix} u^{\prime}.$
With the same CLF and CBF functions as in Example \ref{ex:high-order}, the robot reaches exactly $0$ position, not merely a neighborhood of it or a position on the safety boundary, no matter what value of $p$ is chosen as shown in Fig. \ref{fig:simulated_example4_trajectory}.
\end{example}

There remain several open problems that are worthy of  investigation. These include 1)  control design using hybrid control or time-varying CBF to achieve safety and (almost) global convergence; 2) \bluetext{analyzing closed-loop system behavior with input bounds and multiple compatible CBFs \cite{tan2022compatibility}. } All these problems, however, are out of the scope of this work and require future endeavors.

\section{Conclusion}
 In this paper, we have characterized, for general control-affine systems with a CLF-CBF based quadratic program in the loop, the existence and locations of all possible closed-loop equilibrium points. We further provide analytical results on how the parameter in the program should be chosen to remove the undesired equilibrium points or to confine them in a small neighborhood of the origin. Our main result, a modified quadratic program formulation, is then presented. With the mere assumptions on the existence of a CLF and a CBF, the proposed formulation  guarantees simultaneously the forward invariance of the safety set, the complete elimination of undesired equilibrium points in the interior of it,  the elimination of undesired boundary equilibria with $L_{\myvarfrak{g}}h = \myvar{0}$, and the local asymptotic stability of the origin.


\bibliographystyle{unsrt}        
\bibliography{ref}           

\end{document}